\newtheorem{theorem}{Theorem}
\newtheorem{lemma}{Lemma}
\newtheorem{corollary}{Corollary}
\newtheorem{proposition}{Proposition}
\begin{document}
\title{Bound on optimal local discrimination of multipartite quantum states}
\author{Donghoon Ha}
\affiliation{Department of Applied Mathematics and Institute of Natural Sciences, Kyung Hee University, Yongin 17104, Republic of Korea}
\author{Jeong San Kim}
\email{freddie1@khu.ac.kr}
\affiliation{Department of Applied Mathematics and Institute of Natural Sciences, Kyung Hee University, Yongin 17104, Republic of Korea}

\begin{abstract}
We consider the unambiguous discrimination of multipartite quantum states and provide an upper bound for the maximum success probability of optimal local discrimination. We also provide a necessary and sufficient condition to realize the upper bound. We further establish a necessary and sufficient condition for this upper bound to be saturated.  Finally, we illustrate our results using examples in multidimensional multipartite quantum systems.
\end{abstract}
\maketitle

\section*{INTRODUCTION}
\indent Quantum nonlocality is a quintessential phenomenon of multipartite quantum systems which does not have any classical counterpart. Entanglement is one of the most representative nonlocal quantum correlations which cannot be realized only by \emph{local operations and classical communication} (LOCC)\cite{chit20142,horo2009}. It is known that the nonlocal property of quantum entanglement can be used as a resource in many quantum information processing tasks\cite{chit2019}.\\
\indent Quantum nonlocal phenomenon can also arise in multipartite quantum state discrimination, which is an essential process for efficient information transfer in quantum communication. In general, orthogonal quantum states can be discriminated with certainty, whereas such discrimination is impossible for nonorthogonal quantum states. Along this line, state discrimination strategies are needed to discriminate nonorthogonal quantum states at least with some nonzero probability\cite{chef2000,barn20091,berg2010,bae2015}. However, some orthogonal states of multipartite quantum systems cannot be discriminated with certainty when the available measurements are limited to LOCC measurements\cite{benn19991}. As orthogonal states can always be discriminated with certainty when there is no limitation of possible measurement, this limited discrimination ability of LOCC measurement reveals the nonlocal phenomenon inherent in quantum state discrimination.\\
\indent Nonlocal phenomenon of quantum state discrimination can also arise in discriminating nonorthogonal states of multipartite quantum systems; it is known that some nonorthogonal states cannot be optimally discriminated using only LOCC\cite{pere1991,duan2007,chit2013}. For this reason, much attention has been
shown for the optimal local discrimination of multipartite quantum states \cite{ghos2001,walg2002,fan2004,duan2009,chit20141,band2015,band2021,zhan2020}. Nevertheless, realizing optimal local discrimination still remains a challenging task because it is hard to have a good mathematical characterization of LOCC.\\
\indent One efficient way to overcome this difficulty is to investigate possible upper bounds for the maximum success probability of optimal local discrimination. 
For a better understanding of optimal local discrimination, it is also important to establish good conditions realizing such upper bounds. Recently, an upper bound of maximum success probability was established in local minimum-error discrimination of bipartite quantum states. Moreover, a necessary and sufficient condition was also provided for this upper bound to be saturated\cite{ha2022}.\\
\indent Here, we consider \emph{unambiguous discrimination} (UD)\cite{ivan1987,pere1988,diek1988,zhan2022} among multipartite quantum states of arbitrary dimensions and provide an upper bound for the maximum success probability of optimal local discrimination. Moreover, we provide a necessary and sufficient condition to realize this upper bound. We also establish a necessary and sufficient condition for this upper bound to be saturated. Finally, we illustrate our results using examples in multidimensional multipartite quantum systems.\\
\indent This paper is organized as follows. In the ``\hyperref[ressec]{Results}'' Section, we first recall the definition and some properties about separable operators and separable measurements in multipartite quantum systems. We further recall the definition of UD and provide some useful properties of optimal UD (Proposition~\ref{pro:nsc}).
As the main results of this paper, we provide an upper bound for the maximum success probability of optimal local discrimination by using a certain class of Hermitian operators acting on multipartite Hilbert space (Theorem~\ref{thm:qupb}). Moreover, we provide a necessary and sufficient condition for the Hermitian operator to realize this upper bound (Theorem~\ref{thm:nsc} and Corollary~\ref{cor:qupb}). We also establish a necessary and sufficient condition for this upper bound to be saturated (Corollary~\ref{cor:plqg}). We illustrate our results by examples in multidimensional multipartite quantum systems (Examples~1 and 2). In the ``\hyperref[mtdsec]{Methods}'' Section, we provide a detail proof of Theorem~\ref{thm:qupb}. In the ``\hyperref[dissec]{Discussion}'' Section, we summarize our results and discuss possible future works related to our results.

\section*{RESULTS}\label{ressec}
\indent In multipartite quantum systems, 
a state is described by a density operator $\rho$ 
acting on a multipartite Hilbert space $\mathcal{H}=\bigotimes_{k=1}^{m}\mathcal{H}_{k}$ 
consisting of the subsystems $\mathcal{H}_{k}\cong \mathbb{C}^{d_{k}}$ with $m\geqslant2$ and positive integers $d_{k}$ for $k=1,\ldots,m$. 
A measurement is expressed by a \emph{positive operator-valued measure} $\{M_{i}\}_{i}$ that is a set of positive-semidefinite operators $M_{i}\succeq0$ on $\mathcal{H}$ satisfying completeness relation $\sum_{i}M_{i}=\mathbbm{1}$, where $\mathbbm{1}$ is the identity operator on $\mathcal{H}$. The probability of obtaining the measurement outcome corresponding to $M_{j}$ is $\mathrm{Tr}(\rho M_{j})$ for the prepared state $\rho$.\\
\indent A positive-semidefinite operator $E$ (not necessarily a state) on $\mathcal{H}$ is called \emph{separable} if it can be represented as a summation of
positive-semidefinite product operators, that is,
\begin{equation}
E=\sum_{l}\bigotimes_{k=1}^{m}E_{l,k},
\end{equation}
where $E_{l,k}$ is a positive-semidefinite operator on $\mathcal{H}_{k}$ for each $k=1,\ldots,m$.
We denote the set of all positive-semidefinite \emph{separable} operators on $\mathcal{H}$ as
\begin{equation}\label{eq:sep}
\mathrm{SEP}=\big\{
E\,\big|\,
\mbox{$E$: a positive-semidefinite separable operator acting on $\mathcal{H}$}\big\}.
\end{equation}
A measurement $\{M_{i}\}_{i}$ is called a \emph{LOCC measurement} if it can be implemented by LOCC, and 
it is called \emph{separable} if $M_{i}\in\mathrm{SEP}$ for all $i$. 
It is known that every LOCC measurement is separable\cite{chit20142}.\\
\indent A simple example of LOCC measurement is  
$\{M_{1,i_{1}}\otimes\cdots\otimes M_{m,i_{m}}\}_{i_{1},\ldots,i_{m}}$
where $\{M_{l,i_{l}}\}_{i_{l}}$ is a measurement on the $l$th subsystem for each $l=1,\ldots,m$.
This separable measurement can be implemented by local measurements and classical communication;
For each $l=1,\ldots,m$, a local measurement $\{M_{l,i_{l}}\}_{i_{l}}$ is performed on the $l$th subsystem;
If it is confirmed through classical communication that the measurement result of the $l$th subsystem is $M_{l,i_{l}}$ for each $l=1,\ldots,m$, the measurement result of the whole system becomes $M_{1,i_{1}}\otimes\cdots\otimes M_{m,i_{m}}$. Thus, $\{M_{1,i_{1}}\otimes\cdots\otimes M_{m,i_{m}}\}_{i_{1},\ldots,i_{m}}$ is a LOCC measurement.\\ 
\indent Here, we consider the situation of discriminating \emph{multipartite} quantum states $\rho_{1},\ldots,\rho_{n}$ where the state $\rho_{i}$ is prepared with the probability $\eta_{i}$. We denote this situation as an ensemble $\mathcal{E}=\{\eta_{i},\rho_{i}\}_{i=1}^{n}$.
Moreover, we consider the discrimination of the multipartite state ensemble $\mathcal{E}$ using a measurement $\{M_{i}\}_{i=0}^{n}$ where $M_{0}$ gives inconclusive results about the prepared state and
$M_{i}$ provides conclusive results of $\rho_{i}$ for each $i=1,\ldots,n$.
For the conclusive results to be unambiguous, so-called \emph{no-error condition} is required;
\begin{equation}\label{eq:rmz}
\mathrm{Tr}(\rho_{i}M_{j})=0~~\forall i,j=1,\ldots,n~\mbox{with}~i\neq j.
\end{equation}
By defining
\begin{equation}\label{eq:posi}
\mathrm{Pos}_{i}(\mathcal{E})=
\{E\succeq0\,|\,\mathrm{Tr}(\rho_{j}E)=0~\forall j=1,\ldots,n~\text{with}~j\neq i\},
\end{equation}
Eq.~\eqref{eq:rmz} can be rewritten as
\begin{equation}\label{eq:nec}
M_{i}\in\mathrm{Pos}_{i}(\mathcal{E})~~\forall i=1,\ldots,n.
\end{equation}
We say that a measurement $\{M_{i}\}_{i=0}^{n}$ is \emph{unambiguous} if it satisfies the no-error condition in Eq.~\eqref{eq:nec}. \\
\indent The \emph{optimal UD} of $\mathcal{E}$ is to minimize the probability of obtaining inconclusive results.
Equivalently, the optimal UD of $\mathcal{E}$ is to achieve the optimal success probability of $\mathcal{E}=\{\eta_{i},\rho_{i}\}_{i=1}^{n}$ defined as
\begin{equation}\label{eq:msp}
p_{\rm G}(\mathcal{E})=\max_{\substack{\rm Measurement\\ \rm with~\mbox{\scriptsize\eqref{eq:nec}}}}\sum_{i=1}^{n}\eta_{i}\mathrm{Tr}(\rho_{i}M_{i}),
\end{equation}
where the maximum is taken over all possible unambiguous measurements.
The following proposition provide a necessary and sufficient condition
of an \emph{optimal} unambiguous measurement realizing $p_{\rm G}(\mathcal{E})$\cite{elda2004}.
\begin{proposition}\label{pro:nsc}
For given ensemble $\mathcal{E}=\{\eta_{i},\rho_{i}\}_{i=1}^{n}$,
an unambiguous measurement $\{M_{i}\}_{i=0}^{n}$ provides the optimal success probability $p_{\rm G}(\mathcal{E})$ if and only if there is a Hermitian operator $K$ satisfying the following condition,
\begin{subequations}\label{eq:nscud}
\begin{gather}
K\succeq0,\label{eq:nscuda}\\[1mm]
\mathrm{Tr}(M_{0}K)=0,\label{eq:nscudb}\\[1mm]
K-\eta_{i}\rho_{i}\in\mathrm{Pos}_{i}^{*}(\mathcal{E})~\forall i=1,\ldots,n,\label{eq:nscudc}\\[1mm]
\mathrm{Tr}[M_{i}(K-\eta_{i}\rho_{i})]=0~\forall i=1,\ldots,n,\label{eq:nscudd}
\end{gather}
\end{subequations}
where $\mathrm{Pos}_{i}^{*}(\mathcal{E})$($i=1,\ldots,n$) is the dual set of $\mathrm{Pos}_{i}(\mathcal{E})$ defined as
\begin{equation}\label{eq:posis}
\mathrm{Pos}_{i}^{*}(\mathcal{E})=\{
A\,|\,A^{\dagger}=A,~\mathrm{Tr}(AB)\geqslant0~\forall B\in\mathrm{Pos}_{i}(\mathcal{E})\}.
\end{equation}
In this case, we have
\begin{equation}
p_{\rm G}(\mathcal{E})=\sum_{i=1}^{n}\eta_{i}\mathrm{Tr}(\rho_{i}M_{i})=\mathrm{Tr}K.
\end{equation}
\end{proposition}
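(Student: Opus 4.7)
The plan is to recognise the optimization defining $p_{\rm G}(\mathcal{E})$ as a conic program over the cones $\mathrm{Pos}_{i}(\mathcal{E})$ (together with the positive-semidefinite cone governing $M_{0}$), and to treat the system~\eqref{eq:nscud} as the primal-dual optimality certificate furnished by SDP duality. The ``if'' direction is then a direct weak-duality calculation. Given $\{M_{i}\}$ and $K$ satisfying \eqref{eq:nscuda}--\eqref{eq:nscudd}, let $\{M_{i}'\}_{i=0}^{n}$ be any other unambiguous measurement. The dual-cone relation $K-\eta_{i}\rho_{i}\in\mathrm{Pos}_{i}^{*}(\mathcal{E})$ together with $M_{i}'\in\mathrm{Pos}_{i}(\mathcal{E})$ forces $\mathrm{Tr}[M_{i}'(K-\eta_{i}\rho_{i})]\geqslant0$, so summing and using $\sum_{i=0}^{n}M_{i}'=\mathbbm{1}$ with $K\succeq0$ and $M_{0}'\succeq0$ produces $\sum_{i=1}^{n}\eta_{i}\mathrm{Tr}(\rho_{i}M_{i}')\leqslant\mathrm{Tr}K$. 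Specialising to $\{M_{i}\}$ itself and invoking \eqref{eq:nscudb} and \eqref{eq:nscudd} turns every inequality into an equality, so $p_{\rm G}(\mathcal{E})=\mathrm{Tr}K$ and $\{M_{i}\}$ is optimal.

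For the ``only if'' direction I would introduce a Hermitian Lagrange multiplier $K$ for the completeness relation, forming $L=\sum_{i=1}^{n}\eta_{i}\mathrm{Tr}(\rho_{i}M_{i})+\mathrm{Tr}[K(\mathbbm{1}-\sum_{i=0}^{n}M_{i})]$. The supremum of $L$ over $M_{0}\succeq0$ and $M_{i}\in\mathrm{Pos}_{i}(\mathcal{E})$ is finite exactly when $K\succeq0$ and $K-\eta_{i}\rho_{i}\in\mathrm{Pos}_{i}^{*}(\mathcal{E})$, in which case it equals $\mathrm{Tr}K$. The conic dual therefore reads: minimise $\mathrm{Tr}K$ subject to \eqref{eq:nscuda} and \eqref{eq:nscudc}. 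Assuming strong duality holds, pick a dual optimiser $K$; then \eqref{eq:nscuda} and \eqref{eq:nscudc} are built in, and matching primal and dual objective values at the joint optimum enforces $\mathrm{Tr}(KM_{0})=0$ together with $\mathrm{Tr}[M_{i}(K-\eta_{i}\rho_{i})]=0$, which are exactly \eqref{eq:nscudb} and \eqref{eq:nscudd}.

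The main obstacle is justifying strong duality together with dual attainment. The cleanest route is a Slater-type constraint qualification for the dual program: $K=t\mathbbm{1}$ for sufficiently large $t>0$ lies strictly inside each cone appearing in \eqref{eq:nscuda} and \eqref{eq:nscudc} (since $\mathrm{Pos}_{i}^{*}(\mathcal{E})$ contains the positive-semidefinite cone), while the primal is feasible (take $M_{0}=\mathbbm{1}$ and $M_{i}=0$) and bounded above by $\sum_{i}\eta_{i}\leqslant1$. This yields zero duality gap plus existence of a dual optimiser, after which the complementary-slackness identities above are automatic and the closing formula $p_{\rm G}(\mathcal{E})=\mathrm{Tr}K$ follows directly from the ``if'' computation.
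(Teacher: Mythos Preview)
The paper does not supply its own proof of Proposition~\ref{pro:nsc}; the result is quoted from the literature with a citation to Eldar, Stojnic and Hassibi. Your conic-duality argument is precisely the standard route behind that reference: the weak-duality (``if'') computation is correct, and the Lagrangian derivation of the dual problem $\min\{\mathrm{Tr}K:K\succeq0,\ K-\eta_{i}\rho_{i}\in\mathrm{Pos}_{i}^{*}(\mathcal{E})\}$ is the right one.

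There is one small slip in the ``only if'' direction. A Slater point for the \emph{dual} program (your $K=t\mathbbm{1}$) yields zero duality gap together with attainment of the \emph{primal} optimum, not of the dual; and you cannot simply reverse roles and invoke primal Slater, because each cone $\mathrm{Pos}_{i}(\mathcal{E})$ has empty interior in the full space of Hermitian operators. The easy fix is a direct compactness argument: every dual-feasible $K$ is positive semidefinite, so the sublevel set $\{K\ \text{feasible}:\mathrm{Tr}K\leqslant c\}$ is closed and bounded for any finite $c$, and hence the dual infimum is attained. With that patch the complementary-slackness step goes through and your proof is complete.
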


\indent For each $\mathrm{Pos}_{i}(\mathcal{E})$ in Eq.~\eqref{eq:posi},
we denote the subset of separable operators,
\begin{equation}\label{eq:sepi}
\mathrm{SEP}_{i}(\mathcal{E})=\mathrm{Pos}_{i}(\mathcal{E})\cap\mathrm{SEP}=
\{E\in\mathrm{SEP}\,|\,\mathrm{Tr}(\rho_{j}E)=0~\forall j=1,\ldots,n~\text{with}~j\neq i\}.
\end{equation}
The unambiguous measurement $\{M_{i}\}_{i=0}^{n}$ is called separable if 
\begin{equation}\label{eq:sepm}
M_{0}\in\mathrm{SEP},~M_{i}\in\mathrm{SEP}_{i}(\mathcal{E})~\forall i=1,\ldots,n,
\end{equation}
where $\mathrm{SEP}$ is defined in Eq.~\eqref{eq:sep}. \\
\indent When the available measurements are limited to separable unambiguous measurements, we denote the maximum success probability by
\begin{equation}\label{eq:psep}
p_{\rm SEP}(\mathcal{E})=\max_{\substack{\rm Measurement\\ \rm with~\mbox{\scriptsize\eqref{eq:sepm}}}}\sum_{i=1}^{n}\eta_{i}\mathrm{Tr}(\rho_{i}M_{i}).
\end{equation}
We use $p_{\rm L}(\mathcal{E})$ to denote
the maximum of success probability that can be obtained using LOCC unambiguous measurements, that is, 
\begin{equation}
p_{\rm L}(\mathcal{E})=\max_{\substack{\rm LOCC\\ \rm measurement\\ \rm with~\mbox{\scriptsize\eqref{eq:nec}}}}\sum_{i=1}^{n}\eta_{i}\mathrm{Tr}(\rho_{i}M_{i}).
\end{equation}
\indent For a given ensemble $\mathcal{E}=\{\eta_{i},\rho_{i}\}_{i=1}^{n}$, 
we define $q_{\rm SEP}(\mathcal{E})$
as the minimum quantity
\begin{equation}\label{eq:qsep}
q_{\rm SEP}(\mathcal{E})=\min\mathrm{Tr}H
\end{equation}
over all possible Hermitian operator $H$ satisfying
\begin{subequations}\label{eq:hcon}
\begin{gather}
H\in\mathrm{SEP}^{*},\label{eq:hcona}\\
H-\eta_{i}\rho_{i}\in\mathrm{SEP}_{i}^{*}(\mathcal{E})~\forall i=1,\ldots,n,\label{eq:hconb}
\end{gather}
\end{subequations}
where $\mathrm{SEP}^{*}$ and $\mathrm{SEP}_{i}^{*}(\mathcal{E})$($i=1,\ldots,n$) are the dual sets of $\mathrm{SEP}$ and $\mathrm{SEP}_{i}(\mathcal{E})$ in Eqs.~\eqref{eq:sep} and \eqref{eq:sepi}, respectively, that is,
\begin{equation}\label{eq:sepd}
\begin{array}{rcl}
\mathrm{SEP}^{*}&=&\{
A\,|\,A^{\dagger}=A,~\mathrm{Tr}(AB)\geqslant0~\forall B\in\mathrm{SEP}\},\\[1mm]
\mathrm{SEP}_{i}^{*}(\mathcal{E})&=&\{
A\,|\,A^{\dagger}=A,~\mathrm{Tr}(AB)\geqslant0~\forall B\in\mathrm{SEP}_{i}(\mathcal{E})\}.
\end{array}
\end{equation}
Note that $\mathrm{SEP}^{*}$ contains all positive-semidefinite operators because every element of $\mathrm{SEP}$ is positive semidefinite. Due to the similar reason,
$\mathrm{SEP}_{i}^{*}(\mathcal{E})$($i=1,\ldots,n$) contains all positive-semidefinite operators.
\\
\indent The following theorem shows that $q_{\rm SEP}(\mathcal{E})$ is equal to $p_{\rm SEP}(\mathcal{E})$ in Eq.~\eqref{eq:psep}.
The proof of Theorem~\ref{thm:qupb} is provided in the ``\hyperref[mtdsec]{Methods}'' Section.

\begin{theorem}\label{thm:qupb}
For a multipartite quantum state ensemble $\mathcal{E}=\{\eta_{i},\rho_{i}\}_{i=1}^{n}$, 
\begin{equation}\label{eq:ubsep}
p_{\rm SEP}(\mathcal{E})=q_{\rm SEP}(\mathcal{E}).
\end{equation}
\end{theorem}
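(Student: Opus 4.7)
The plan is to recognize Eq.~\eqref{eq:ubsep} as the strong-duality statement of a conic linear program. The quantity $p_{\rm SEP}(\mathcal{E})$ in Eq.~\eqref{eq:psep} is the maximum of a linear functional in the variables $M_0,\ldots,M_n$ lying in the convex cone $\mathrm{SEP}\times\prod_{i=1}^{n}\mathrm{SEP}_i(\mathcal{E})$ subject to the affine constraint $\sum_{i=0}^n M_i=\mathbbm{1}$. Introducing a Hermitian Lagrange multiplier $H$ for that equality gives the Lagrangian
\begin{equation}
L(\{M_i\},H)=\mathrm{Tr}\,H+\sum_{i=1}^{n}\mathrm{Tr}\bigl[(\eta_i\rho_i-H)M_i\bigr]-\mathrm{Tr}(HM_0),
\end{equation}
whose supremum over the cones is finite exactly when $H$ satisfies Eqs.~\eqref{eq:hcona}--\eqref{eq:hconb} (this is the very definition of dual cones in Eq.~\eqref{eq:sepd}), and in that case the supremum equals $\mathrm{Tr}\,H$. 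So $q_{\rm SEP}(\mathcal{E})$ is precisely the Lagrangian dual, and the theorem is the no-duality-gap assertion.

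Weak duality, $q_{\rm SEP}(\mathcal{E})\geqslant p_{\rm SEP}(\mathcal{E})$, is immediate. For any primal-feasible $\{M_i\}_{i=0}^{n}$ and dual-feasible $H$, use $\sum_i M_i=\mathbbm{1}$ to write
\begin{equation}
\mathrm{Tr}\,H=\mathrm{Tr}(HM_0)+\sum_{i=1}^{n}\mathrm{Tr}(HM_i).
\end{equation}
The first summand is nonnegative because $H\in\mathrm{SEP}^{*}$ and $M_0\in\mathrm{SEP}$, and for each $i\geqslant 1$ the inequality $\mathrm{Tr}[(H-\eta_i\rho_i)M_i]\geqslant 0$ holds because $H-\eta_i\rho_i\in\mathrm{SEP}_i^{*}(\mathcal{E})$ and $M_i\in\mathrm{SEP}_i(\mathcal{E})$. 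Summing yields $\mathrm{Tr}\,H\geqslant\sum_{i=1}^{n}\eta_i\mathrm{Tr}(\rho_iM_i)$, and taking inf over $H$ and sup over $\{M_i\}$ gives the desired inequality on the optima.

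For strong duality, $q_{\rm SEP}(\mathcal{E})\leqslant p_{\rm SEP}(\mathcal{E})$, the plan is to invoke a Slater-type constraint qualification and appeal to a standard conic duality theorem. The dual is strictly feasible: taking $H=c\mathbbm{1}$ with $c>\max_i\eta_i\|\rho_i\|_{\infty}$ makes $H\succ 0$ and $H-\eta_i\rho_i\succ 0$ for every $i$, and since strictly positive-definite operators lie in the interior of the positive-semidefinite cone, which sits inside both $\mathrm{SEP}^{*}$ and each $\mathrm{SEP}_i^{*}(\mathcal{E})$, this $H$ lies in the interior of the dual feasible set. The primal is also feasible (e.g.\ $M_0=\mathbbm{1}$, $M_i=0$) and bounded since $0\preceq M_i\preceq\mathbbm{1}$, so the maximum is attained and the duality gap vanishes. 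A self-contained alternative is to apply Sion's minimax theorem directly to $L$ on the compact convex primal set against an increasing sequence of bounded $H$, then pass to the limit.

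The main obstacle is the strong-duality half: verifying the constraint qualification with the correct notion of interior in the space of Hermitian operators on $\mathcal{H}$, since the cones $\mathrm{SEP}$ and $\mathrm{SEP}_i(\mathcal{E})$ are not full-dimensional and their duals are not self-dual. The key structural observation that makes this work is that the positive-semidefinite cone is contained in every relevant dual cone, so a single scalar multiple of $\mathbbm{1}$ provides a uniform Slater point; once this is recognized, the remainder is a routine application of conic LP duality.
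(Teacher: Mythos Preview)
Your approach is correct and genuinely different from the paper's. You cast the problem as a conic linear program and close the duality gap by exhibiting a Slater point $H=c\mathbbm{1}$ for the dual; since every primal cone $\mathrm{SEP},\mathrm{SEP}_i(\mathcal{E})$ is pointed (being contained in the positive-semidefinite cone), each dual cone has nonempty interior, and a strictly positive-definite operator lies in $\mathrm{int}(\mathrm{PSD})\subseteq\mathrm{int}(\mathrm{SEP}^{*})$ and $\mathrm{int}(\mathrm{SEP}_i^{*}(\mathcal{E}))$. The paper instead works from first principles: it applies the separating-hyperplane theorem to the convex set $S(\mathcal{E})$ of Eq.~\eqref{eq:sesm} to extract a pair $(\gamma,\Gamma)$, then verifies by hand the four conditions \eqref{eq:trglg}--\eqref{eq:gleq0} that allow one to set $H=\Gamma/\gamma$; the case $p_{\rm SEP}(\mathcal{E})=0$ is treated separately with $H=0_{\mathcal{H}}$. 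Your route is shorter and handles both cases uniformly, at the cost of invoking an off-the-shelf conic duality theorem; the paper's route is self-contained and, because it actually constructs a feasible $H$ with $\mathrm{Tr}H\leqslant p_{\rm SEP}(\mathcal{E})$, it yields dual attainment (so the ``$\min$'' in Eq.~\eqref{eq:qsep} is justified) as a by-product. Your Slater point, being on the dual side, certifies primal attainment and zero gap but not dual attainment; if that is needed you should add the one-line observation that $\{H\in\mathrm{SEP}^{*}:\mathrm{Tr}H\leqslant C\}$ is compact because $\mathbbm{1}$ lies in the interior of $\mathrm{SEP}=(\mathrm{SEP}^{*})^{*}$. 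One small correction: $\mathrm{SEP}$ is in fact full-dimensional in the Hermitian operators (the paper uses exactly this in Lemma~\ref{lem:trn}); only the cones $\mathrm{SEP}_i(\mathcal{E})$ are generically thin, and as you note this is irrelevant because your Slater point lives on the dual side.
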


\indent For a given ensemble $\mathcal{E}=\{\eta_{i},\rho_{i}\}_{i=1}^{n}$, the following theorem provides a necessary and sufficient condition on a Hermitian operator $H$ to realize $q_{\rm SEP}(\mathcal{E})$.

\begin{theorem}\label{thm:nsc}
For a multipartite quantum state ensemble $\mathcal{E}=\{\eta_{i},\rho_{i}\}_{i=1}^{n}$, a Hermitian operator $H$ satisfying Condition~\eqref{eq:hcon} gives $q_{\rm SEP}(\mathcal{E})$ if and only if there is a separable unambiguous measurement $\{M_{i}\}_{i=0}^{n}$ such that
\begin{subequations}\label{eq:nscqs}
\begin{gather}
\mathrm{Tr}(M_{0}H)=0,\label{eq:nscqsa}\\
\mathrm{Tr}[M_{i}(H-\eta_{i}\rho_{i})]=0~~\forall i=1,\ldots,n.\label{eq:nscqsb}
\end{gather}
\end{subequations}
In this case, we have
\begin{equation}
q_{\rm SEP}(\mathcal{E})=\mathrm{Tr}H=\sum_{i=1}^{n}\eta_{i}\mathrm{Tr}(\rho_{i}M_{i}).
\end{equation}
\end{theorem}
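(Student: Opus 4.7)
My plan is to treat the statement as a complementary-slackness characterization of primal/dual optimality, where the primal is the maximization defining $p_{\rm SEP}(\mathcal{E})$ in~\eqref{eq:psep} and the dual is the minimization defining $q_{\rm SEP}(\mathcal{E})$ in~\eqref{eq:qsep}. I will reduce both directions to a single trace identity combined with the strong-duality equality already supplied by Theorem~\ref{thm:qupb}.

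The central computation is as follows. For any separable unambiguous measurement $\{M_{i}\}_{i=0}^{n}$ and any Hermitian $H$ obeying Condition~\eqref{eq:hcon}, I will expand $\mathrm{Tr}H=\mathrm{Tr}\bigl(H\sum_{i=0}^{n}M_{i}\bigr)$ using the completeness relation $\sum_{i=0}^{n}M_{i}=\mathbbm{1}$ to obtain
\begin{equation}
\mathrm{Tr}H - \sum_{i=1}^{n}\eta_{i}\mathrm{Tr}(\rho_{i}M_{i})
= \mathrm{Tr}(M_{0}H) + \sum_{i=1}^{n}\mathrm{Tr}\bigl[M_{i}(H-\eta_{i}\rho_{i})\bigr].
\end{equation}
Both terms on the right-hand side are non-negative: $\mathrm{Tr}(M_{0}H)\geqslant 0$ because $M_{0}\in\mathrm{SEP}$ and $H\in\mathrm{SEP}^{*}$ by~\eqref{eq:hcona}, while $\mathrm{Tr}[M_{i}(H-\eta_{i}\rho_{i})]\geqslant 0$ because $M_{i}\in\mathrm{SEP}_{i}(\mathcal{E})$ and $H-\eta_{i}\rho_{i}\in\mathrm{SEP}_{i}^{*}(\mathcal{E})$ by~\eqref{eq:hconb}. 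This already delivers weak duality $p_{\rm SEP}(\mathcal{E})\leqslant q_{\rm SEP}(\mathcal{E})$, and the equalities in~\eqref{eq:nscqs} are precisely the requirement that these two non-negative summands vanish individually.

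For the sufficiency direction, under the hypothesis~\eqref{eq:nscqs} the identity forces $\mathrm{Tr}H=\sum_{i=1}^{n}\eta_{i}\mathrm{Tr}(\rho_{i}M_{i})\leqslant p_{\rm SEP}(\mathcal{E})$, which combined with feasibility of $H$ (giving $q_{\rm SEP}(\mathcal{E})\leqslant \mathrm{Tr}H$) and Theorem~\ref{thm:qupb} ($p_{\rm SEP}(\mathcal{E})=q_{\rm SEP}(\mathcal{E})$) pins $\mathrm{Tr}H=q_{\rm SEP}(\mathcal{E})$. For the necessity direction, assuming $H$ already realizes $q_{\rm SEP}(\mathcal{E})$, I will invoke Theorem~\ref{thm:qupb} together with a compactness argument to produce a separable unambiguous measurement $\{M_{i}\}_{i=0}^{n}$ attaining $p_{\rm SEP}(\mathcal{E})=\mathrm{Tr}H$; the left-hand side of the identity then vanishes, and since each summand on the right is non-negative each must vanish separately, yielding~\eqref{eq:nscqs}.

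The only step that is not purely mechanical is the attainment of the optimum in the necessity direction, which I will justify by noting that the set of separable POVMs $\{M_{i}\}_{i=0}^{n}$ satisfying~\eqref{eq:sepm} is a closed bounded subset of a finite-dimensional real vector space (boundedness from $\sum_{i}M_{i}=\mathbbm{1}$ with $M_{i}\succeq 0$, closedness from the closedness of the cones $\mathrm{SEP}$ and $\mathrm{SEP}_{i}(\mathcal{E})$), while the linear objective is continuous. The substantive duality content has been absorbed into Theorem~\ref{thm:qupb}, so beyond this attainment remark the entire proof reduces to the trace identity above and non-negativity of dual-cone pairings.
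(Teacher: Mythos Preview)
Your proof is correct and follows essentially the same approach as the paper: both directions rest on the trace identity $\mathrm{Tr}H-\sum_{i}\eta_{i}\mathrm{Tr}(\rho_{i}M_{i})=\mathrm{Tr}(M_{0}H)+\sum_{i}\mathrm{Tr}[M_{i}(H-\eta_{i}\rho_{i})]$, the non-negativity of each summand from the dual-cone pairings, and the strong duality of Theorem~\ref{thm:qupb}. The only difference is that you explicitly justify the attainment of $p_{\rm SEP}(\mathcal{E})$ via compactness of the feasible set of separable unambiguous measurements, whereas the paper simply takes the existence of an optimal $\{M_{i}\}_{i=0}^{n}$ for granted.
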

\begin{proof}
For the necessity, suppose that $H$ is a Hermitian operator providing $q_{\rm SEP}(\mathcal{E})$.
We also denote $\{M_{i}\}_{i=0}^{n}$ as a separable unambiguous measurement giving $p_{\rm SEP}(\mathcal{E})$.
From Conditions \eqref{eq:sepm} and \eqref{eq:hcon}, we have
\begin{equation}\label{eq:fnp1}
\mathrm{Tr}(M_{0}H)\geqslant0,~\mathrm{Tr}[M_{i}(H-\eta_{i}\rho_{i})]\geqslant0~\forall i=1,\ldots,n.
\end{equation}
We also note that
\begin{equation}\label{eq:fnp2}
\mathrm{Tr}(M_{0}H)+\sum_{i=1}^{n}\mathrm{Tr}[M_{i}(H-\eta_{i}\rho_{i})]
=\mathrm{Tr}H-\sum_{i=1}^{n}\eta_{i}\mathrm{Tr}(\rho_{i}M_{i})
=q_{\rm SEP}(\mathcal{E})-p_{\rm SEP}(\mathcal{E})=0,
\end{equation}
where the first equality follows from $\sum_{i=0}^{n}M_{i}=\mathbbm{1}$,
the second equality is due to the assumption of $H$ and $\{M_{i}\}_{i=0}^{n}$, and the last equality is by Theorem~\ref{thm:qupb}. 
Inequality~\eqref{eq:fnp1} and Eq.~\eqref{eq:fnp2} lead us to
Condition~\eqref{eq:nscqs}. Therefore, $\{M_{i}\}_{i=0}^{n}$ is a separable unambiguous measurement satisfying Condition \eqref{eq:nscqs}.\\
\indent For the sufficiency, we assume that $\{M_{i}\}_{i=0}^{n}$ is a separable unambiguous measurement and $H$ is a Hermitian operator satisfying Conditions~\eqref{eq:hcon} and \eqref{eq:nscqs}. This assumption implies
\begin{equation}\label{eq:fsp}
q_{\rm SEP}(\mathcal{E})
=p_{\rm SEP}(\mathcal{E})
\geqslant\sum_{i=1}^{n}\eta_{i}\mathrm{Tr}(\rho_{i}M_{i})
=\sum_{i=1}^{n}\eta_{i}\mathrm{Tr}(\rho_{i}M_{i})
+\mathrm{Tr}(M_{0}H)+\sum_{i=1}^{n}\mathrm{Tr}[M_{i}(H-\eta_{i}\rho_{i})]
=\mathrm{Tr}H
\geqslant q_{\rm SEP}(\mathcal{E}),
\end{equation}
where the first equality follows from Theorem~\ref{thm:qupb},
the second equality is from Condition~\eqref{eq:nscqs},
the last equality is due to $\sum_{i=0}^{n}M_{i}=\mathbbm{1}$, and 
the first and second inequalities are from the definitions of $p_{\rm SEP}(\mathcal{E})$ and $q_{\rm SEP}(\mathcal{E})$, respectively.
Inequality~\eqref{eq:fsp} leads us to $\mathrm{Tr}H=q_{\rm SEP}(\mathcal{E})$. Therefore, $H$ is a Hermitian operator giving $q_{\rm SEP}(\mathcal{E})$.
\end{proof}

\indent From Theorems \ref{thm:qupb} and \ref{thm:nsc}, we have the following corollary providing a necessary and sufficient condition on a separable unambiguous measurement $\{M_{i}\}_{i=0}^{n}$ to realize $p_{\rm SEP}(\mathcal{E})$.

\begin{corollary}\label{cor:qupb}
For a multipartite quantum state ensemble $\mathcal{E}=\{\eta_{i},\rho_{i}\}_{i=1}^{n}$, 
a separable unambiguous measurement $\{M_{i}\}_{i=0}^{n}$ gives $p_{\rm SEP}(\mathcal{E})$ if and only if there is a Hermitian operator $H$ satisfying Conditions~\eqref{eq:hcon} and \eqref{eq:nscqs}.
In this case, we have
\begin{equation}
p_{\rm SEP}(\mathcal{E})=\sum_{i=1}^{n}\eta_{i}\mathrm{Tr}(\rho_{i}M_{i})=\mathrm{Tr}H.
\end{equation}
\end{corollary}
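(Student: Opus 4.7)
The plan is to deduce this corollary as a direct consequence of Theorem~\ref{thm:qupb} and the same complementary-slackness computation that proves Theorem~\ref{thm:nsc}, but with the roles of the Hermitian operator and the measurement swapped. Theorem~\ref{thm:nsc} asks which $H$ is optimal and answers ``those for which some optimal $\{M_i\}$ pairs up with it''; the corollary asks which $\{M_i\}$ is optimal and will be answered ``those for which some optimal $H$ pairs up with them''. The bridge between the two viewpoints is the strong-duality equality $p_{\rm SEP}(\mathcal{E})=q_{\rm SEP}(\mathcal{E})$ from Theorem~\ref{thm:qupb}.

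For the necessity direction, I would fix an arbitrary separable unambiguous measurement $\{M_i\}_{i=0}^{n}$ attaining $p_{\rm SEP}(\mathcal{E})$, and then invoke attainment of the minimum in \eqref{eq:qsep} to select a Hermitian operator $H$ satisfying \eqref{eq:hcon} with $\mathrm{Tr}H=q_{\rm SEP}(\mathcal{E})$. Separability of $\{M_i\}$ from \eqref{eq:sepm} together with the dual membership \eqref{eq:hcon} of $H$ makes $\mathrm{Tr}(M_0H)$ and each $\mathrm{Tr}[M_i(H-\eta_i\rho_i)]$ nonnegative. Summing over $i=0,\ldots,n$ and using $\sum_{i=0}^{n}M_i=\mathbbm{1}$, the total collapses to $\mathrm{Tr}H-\sum_{i=1}^{n}\eta_i\mathrm{Tr}(\rho_iM_i)=q_{\rm SEP}(\mathcal{E})-p_{\rm SEP}(\mathcal{E})$, which vanishes by Theorem~\ref{thm:qupb}. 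Each individual nonnegative term must therefore vanish, giving \eqref{eq:nscqs}. This is essentially the computation already carried out in \eqref{eq:fnp1}--\eqref{eq:fnp2}, only interpreted with the measurement held fixed.

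For the sufficiency direction, I would suppose that some Hermitian operator $H$ satisfies \eqref{eq:hcon} and, jointly with the given measurement $\{M_i\}_{i=0}^{n}$, also satisfies \eqref{eq:nscqs}, and then repeat the chain of inequalities \eqref{eq:fsp} verbatim:
\begin{equation*}
p_{\rm SEP}(\mathcal{E})\geqslant\sum_{i=1}^{n}\eta_i\mathrm{Tr}(\rho_iM_i)=\mathrm{Tr}H\geqslant q_{\rm SEP}(\mathcal{E})=p_{\rm SEP}(\mathcal{E}),
\end{equation*}
where the middle equality uses \eqref{eq:nscqs} together with $\sum_{i=0}^{n}M_i=\mathbbm{1}$, and the last equality is Theorem~\ref{thm:qupb}. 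Equality throughout forces $\sum_{i=1}^{n}\eta_i\mathrm{Tr}(\rho_iM_i)=p_{\rm SEP}(\mathcal{E})$, so $\{M_i\}$ is optimal, and also $\mathrm{Tr}H=q_{\rm SEP}(\mathcal{E})$ as claimed.

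I do not anticipate a genuine obstacle. The only delicate point is the existence of an optimal $H$ used in the necessity direction, which follows from standard compactness arguments for the closed convex feasible region defined by \eqref{eq:hcon}. Once Theorem~\ref{thm:qupb} (strong duality) and attainment of the minimum in \eqref{eq:qsep} are in hand, the corollary reduces to the same complementary-slackness manipulation that underlies Theorem~\ref{thm:nsc}, so no substantially new argument is required.
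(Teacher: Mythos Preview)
Your proposal is correct and is precisely the elaboration the paper has in mind when it writes that the corollary follows ``from Theorems~\ref{thm:qupb} and \ref{thm:nsc}'': you reuse the complementary-slackness computations \eqref{eq:fnp1}--\eqref{eq:fnp2} and \eqref{eq:fsp} with the roles of the fixed and the sought-for object swapped. Your remark about attainment of the minimum in \eqref{eq:qsep} is apt; in the paper this is implicitly supplied by the separating-hyperplane construction in the Methods section, which produces a feasible $H$ with $\mathrm{Tr}H\leqslant p_{\rm SEP}(\mathcal{E})$.
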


Moreover, we have the following corollary that provides the relative ordering between $p_{\rm L}(\mathcal{E})$ and $q_{\rm SEP}(\mathcal{E})$.

\begin{corollary}\label{cor:plqg}
For a multipartite quantum state ensemble $\mathcal{E}=\{\eta_{i},\rho_{i}\}_{i=1}^{n}$, 
\begin{equation}
p_{\rm L}(\mathcal{E})\leqslant q_{\rm SEP}(\mathcal{E}),
\end{equation}
where the equality holds
if and only if there is a LOCC unambiguous measurement $\{M_{i}\}_{i=0}^{n}$ and a Hermitian operator $H$ satisfying Conditions~\eqref{eq:hcon} and \eqref{eq:nscqs}.
\end{corollary}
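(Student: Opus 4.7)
The plan is to derive the inequality from the inclusion of LOCC measurements in separable measurements and use Theorem~\ref{thm:qupb}, then characterize equality using Corollary~\ref{cor:qupb} applied to a LOCC measurement.

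For the inequality, I would first recall that every LOCC measurement is separable, as noted in the excerpt. This immediately implies that the set over which the maximum defining $p_{\rm L}(\mathcal{E})$ is taken is contained in the set over which $p_{\rm SEP}(\mathcal{E})$ is taken, so $p_{\rm L}(\mathcal{E})\leqslant p_{\rm SEP}(\mathcal{E})$. Combining this with Theorem~\ref{thm:qupb}, which states $p_{\rm SEP}(\mathcal{E})=q_{\rm SEP}(\mathcal{E})$, I would conclude $p_{\rm L}(\mathcal{E})\leqslant q_{\rm SEP}(\mathcal{E})$.

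For the sufficiency of the equality condition, assume a LOCC unambiguous measurement $\{M_{i}\}_{i=0}^{n}$ and a Hermitian operator $H$ satisfy Conditions~\eqref{eq:hcon} and \eqref{eq:nscqs}. Since $\{M_{i}\}_{i=0}^{n}$ is LOCC and hence separable, it is a separable unambiguous measurement, so Corollary~\ref{cor:qupb} yields $\sum_{i=1}^{n}\eta_{i}\mathrm{Tr}(\rho_{i}M_{i})=\mathrm{Tr}H=p_{\rm SEP}(\mathcal{E})=q_{\rm SEP}(\mathcal{E})$. Because $\{M_{i}\}_{i=0}^{n}$ is LOCC, its success probability is a lower bound on $p_{\rm L}(\mathcal{E})$, so $p_{\rm L}(\mathcal{E})\geqslant q_{\rm SEP}(\mathcal{E})$, which together with the already-established inequality gives equality.

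For the necessity, suppose $p_{\rm L}(\mathcal{E})=q_{\rm SEP}(\mathcal{E})$. Combining with Theorem~\ref{thm:qupb} gives $p_{\rm L}(\mathcal{E})=p_{\rm SEP}(\mathcal{E})$. Take a LOCC unambiguous measurement $\{M_{i}\}_{i=0}^{n}$ that attains the maximum defining $p_{\rm L}(\mathcal{E})$; then it attains $p_{\rm SEP}(\mathcal{E})$ as well, because it is separable. Applying Corollary~\ref{cor:qupb} to this measurement produces a Hermitian operator $H$ satisfying Conditions~\eqref{eq:hcon} and \eqref{eq:nscqs}, providing the required pair.

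I do not expect any serious obstacle: the statement is essentially a bookkeeping consequence of Theorem~\ref{thm:qupb} and Corollary~\ref{cor:qupb} together with the containment of LOCC measurements in separable measurements. The only mild subtlety is that one invokes attainment of the maximum in the definition of $p_{\rm L}(\mathcal{E})$ in the necessity direction, which is consistent with the paper's use of $\max$ in that definition.
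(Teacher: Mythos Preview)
Your proposal is correct and follows essentially the same approach as the paper: both use the containment of LOCC measurements in separable measurements to get $p_{\rm L}(\mathcal{E})\leqslant p_{\rm SEP}(\mathcal{E})$, then invoke Theorem~\ref{thm:qupb} and Corollary~\ref{cor:qupb} to handle the equality case. The paper's proof is more terse, but your unpacking of the sufficiency and necessity directions matches its logic exactly, including the implicit use of attainment of the maximum in $p_{\rm L}(\mathcal{E})$.
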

\begin{proof}
Since every LOCC measurement is separable, 
$p_{\rm L}(\mathcal{E})\leqslant p_{\rm SEP}(\mathcal{E})$.
Moreover, $p_{\rm L}(\mathcal{E})=p_{\rm SEP}(\mathcal{E})$ if and only if there is a LOCC unambiguous measurement realizing $p_{\rm SEP}(\mathcal{E})$. Thus, we can show from Theorem~\ref{thm:qupb} and Corollary~\ref{cor:qupb} that our statement is true.
\end{proof}

\indent Here, we provide the following example of two-qubit state ensemble $\mathcal{E}$ to illustrate 
how our results can be used to obtain
$p_{\rm G}(\mathcal{E})$, $q_{\rm SEP}(\mathcal{E})$, and $p_{\rm L}(\mathcal{E})$.\\

\noindent\textbf{Example 1.}
Let us consider 
the two-qubit state ensemble $\mathcal{E}=\{\eta_{i},\rho_{i}\}_{i=1}^{3}$ consisting of three product states with equal prior probabilities,
\begin{equation}\label{eq:dte}
\begin{array}{lll}
\eta_{1}=\frac{1}{3},~~\rho_{1}=|0\rangle\!\langle0|\otimes|0\rangle\!\langle0|,&\\[1mm]
\eta_{2}=\frac{1}{3},~~\rho_{2}=|\nu_{+}\rangle\!\langle\nu_{+}|\otimes|\nu_{+}\rangle\!\langle\nu_{+}|,~~
|\nu_{+}\rangle=\frac{1}{2}|0\rangle+\frac{\sqrt{3}}{2}|1\rangle,\\[1mm]
\eta_{3}=\frac{1}{3},~~\rho_{3}=|\nu_{-}\rangle\!\langle\nu_{-}|\otimes|\nu_{-}\rangle\!\langle\nu_{-}|,~~|\nu_{-}\rangle=\frac{1}{2}|0\rangle-\frac{\sqrt{3}}{2}|1\rangle.
\end{array}
\end{equation}
To obtain $p_{\rm G}(\mathcal{E})$, we use the unambiguous measurement $\{M_{i}\}_{i=0}^{3}$ with
\begin{equation}\label{eq:gopm}
\begin{array}{ll}
M_{0}=\frac{1}{2}|\Phi_{+}\rangle\!\langle\Phi_{+}|+|\Psi_{-}\rangle\!\langle\Psi_{-}|,\\[1mm]
M_{1}=\frac{5}{6}|\Phi_{1}\rangle\!\langle\Phi_{1}|,~|\Phi_{1}\rangle=\frac{3}{\sqrt{10}}|00\rangle-\frac{1}{\sqrt{10}}|11\rangle,\\[1mm]
M_{2}=\frac{5}{6}|\Phi_{2}\rangle\!\langle\Phi_{2}|,~|\Phi_{2}\rangle=\sqrt{\frac{3}{10}}|01\rangle+\sqrt{\frac{3}{10}}|10\rangle+\frac{2}{\sqrt{10}}|11\rangle,\\[1mm]
M_{3}=\frac{5}{6}|\Phi_{3}\rangle\!\langle\Phi_{3}|,~|\Phi_{3}\rangle=\sqrt{\frac{3}{10}}|01\rangle+\sqrt{\frac{3}{10}}|10\rangle-\frac{2}{\sqrt{10}}|11\rangle,
\end{array}
\end{equation}
and Hermitian operator
\begin{equation}\label{eq:gopk}
K=\frac{3}{8}|\Phi_{-}\rangle\!\langle\Phi_{-}|+\frac{3}{8}|\Psi_{+}\rangle\!\langle\Psi_{+}|,
\end{equation}
where
\begin{equation}
|\Phi_{\pm}\rangle=\frac{1}{\sqrt{2}}|00\rangle\pm\frac{1}{\sqrt{2}}|11\rangle,~
|\Psi_{\pm}\rangle=\frac{1}{\sqrt{2}}|01\rangle\pm\frac{1}{\sqrt{2}}|10\rangle
\end{equation}
are the Bell states in two-qubit systems.
We will show the unambiguous measurement $\{M_{i}\}_{i=0}^{3}$ and the Hermitian operator $K$ satisfy the conditions of Proposition~\ref{pro:nsc} for the ensemble $\mathcal{E}=\{\eta_{i},\rho_{i}\}_{i=1}^{3}$ in Eq.~\eqref{eq:dte}.\\
\indent For $K$ of Eq.~\eqref{eq:gopk},
Condition \eqref{eq:nscuda} is obvious and Condition \eqref{eq:nscudb} is also true due to the orthogonality of Bell states. For Condition \eqref{eq:nscudc}, we first note $|\Psi_{-}\rangle\!\langle\Psi_{-}|$ is orthogonal to each $\rho_{i}$, that is, $\mathrm{Tr}(|\Psi_{-}\rangle\!\langle\Psi_{-}|\rho_{i})=0$, $i=1,2,3$. Moreover, $|\Phi_{i}\rangle\!\langle\Phi_{i}|$ is orthogonal to $\rho_{j}$ for all $i,j=1,2,3$ with $i\neq j$. Thus, we have
\begin{equation}
\mathrm{Pos}_{i}(\mathcal{E})=\{E\succeq0\,|\,\mbox{ $E$ acting on the subspace spanned by $|\Phi_{i}\rangle$ and $|\Psi_{-}\rangle$}\}~~\forall i=1,2,3,
\end{equation}
for the ensemble $\mathcal{E}$ of Eq.~\eqref{eq:dte}.
Now, a straightforward calculation leads us to
\begin{equation}\label{eq:sclu}
\langle\Phi_{i}|(K-\eta_{i}\rho_{i})|\Phi_{i}\rangle
=\langle\Phi_{i}|(K-\eta_{i}\rho_{i})|\Psi_{-}\rangle
=\langle\Psi_{-}|(K-\eta_{i}\rho_{i})|\Psi_{-}\rangle=0~~\forall i=1,2,3,
\end{equation}
and this implies 
\begin{equation}\label{eq:trker}
\mathrm{Tr}[(K-\eta_{i}\rho_{i})E]=0
\end{equation}
for all $E$ in $\mathrm{Pos}_{i}(\mathcal{E})$.
From the definition of $\mathrm{Pos}_{i}^{*}(\mathcal{E})$ in Eq.~\eqref{eq:posis}, we have $K-\eta_{i}\rho_{i}\in\mathrm{Pos}_{i}^{*}(\mathcal{E})$ for each $i=1,2,3$, and this shows the validity of Condition \eqref{eq:nscudc}.
Finally, Condition \eqref{eq:nscudd} naturally follows from Eqs.~\eqref{eq:gopm} and \eqref{eq:sclu}.
From Proposition~\ref{pro:nsc}, the optimal success probability $p_{\rm G}(\mathcal{E})$ in Eq.~\eqref{eq:msp} is
\begin{equation}
p_{\rm G}(\mathcal{E})=\sum_{i=1}^{3}\eta_{i}\mathrm{Tr}(\rho_{i}M_{i})=\mathrm{Tr}K=\frac{3}{4}.
\end{equation}
\indent To obtain $q_{\rm SEP}(\mathcal{E})$ in Eq.~\eqref{eq:qsep}, let us consider
the Hermitian operator 
\begin{equation}\label{eq:exho1}
H=\frac{1}{2}|\Psi_{-}\rangle\!\langle\Psi_{-}|,
\end{equation}
and  the separable unambiguous measurement $\{M_{i}\}_{i=0}^{3}$ consisting of 
\begin{equation}\label{eq:elom}
\begin{array}{l}
M_{0}=\frac{4}{9}|1\rangle\!\langle1|\otimes|1\rangle\!\langle1|+\frac{4}{9}|\mu_{+}\rangle\!\langle\mu_{+}|\otimes|\mu_{+}\rangle\!\langle\mu_{+}|+\frac{4}{9}|\mu_{-}\rangle\!\langle\mu_{-}|\otimes|\mu_{-}\rangle\!\langle\mu_{-}|,\\[1mm]
M_{1}=\frac{4}{9}|\mu_{+}\rangle\!\langle\mu_{+}|\otimes|\mu_{-}\rangle\!\langle\mu_{-}|+\frac{4}{9}|\mu_{-}\rangle\!\langle\mu_{-}|\otimes|\mu_{+}\rangle\!\langle\mu_{+}|,\\[1mm]
M_{2}=\frac{4}{9}|\mu_{+}\rangle\!\langle\mu_{+}|\otimes|1\rangle\!\langle1|+\frac{4}{9}
|1\rangle\!\langle1|\otimes|\mu_{+}\rangle\!\langle\mu_{+}|,\\[1mm]
M_{3}=\frac{4}{9}|\mu_{-}\rangle\!\langle\mu_{-}|\otimes|1\rangle\!\langle1|+\frac{4}{9}
|1\rangle\!\langle1|\otimes|\mu_{-}\rangle\!\langle\mu_{-}|,
\end{array}
\end{equation}
where
\begin{equation}
|\mu_{\pm}\rangle=\frac{\sqrt{3}}{2}|0\rangle\pm\frac{1}{2}|1\rangle.
\end{equation}
We will show the Hermitian operator $H$ and the separable unambiguous measurement $\{M_{i}\}_{i=0}^{3}$ satisfy the conditions of Theorem~\ref{thm:nsc} for the ensemble $\mathcal{E}=\{\eta_{i},\rho_{i}\}_{i=1}^{3}$ in Eq.~\eqref{eq:dte}.\\
\indent For $H$ of Eq.~\eqref{eq:exho1},
Condition \eqref{eq:hcona} holds due to the argument after Eq.~\eqref{eq:sepd}
and Condition \eqref{eq:nscqsa} is also true from the fact that $|\Psi_{-}\rangle$ is orthogonal to $|1\rangle\otimes|1\rangle$,$|\mu_{+}\rangle\otimes|\mu_{+}\rangle$,$|\mu_{-}\rangle\otimes|\mu_{-}\rangle$. For Condition \eqref{eq:hconb}, we first note every positive-semidefinite product operator orthogonal to $\rho_{2}$ and $\rho_{3}$ is proportional to $|\mu_{+}\rangle\!\langle\mu_{+}|\otimes|\mu_{-}\rangle\!\langle\mu_{-}|$ or $|\mu_{-}\rangle\!\langle\mu_{-}|\otimes|\mu_{+}\rangle\!\langle\mu_{+}|$. Moreover, every positive-semidefinite product operator orthogonal to $\rho_{1}$ and $\rho_{3(2)}$ is proportional to $|\mu_{+(-)}\rangle\!\langle\mu_{+(-)}|\otimes|1\rangle\!\langle 1|$ or $|1\rangle\!\langle 1|\otimes|\mu_{+(-)}\rangle\!\langle\mu_{+(-)}|$.
Thus, we have
\begin{equation}
\begin{array}{l}
\mathrm{SEP}_{1}(\mathcal{E})=\{
a|\mu_{+}\rangle\!\langle\mu_{+}|\otimes|\mu_{-}\rangle\!\langle\mu_{-}|
+b|\mu_{-}\rangle\!\langle\mu_{-}|\otimes|\mu_{+}\rangle\!\langle\mu_{+}|~|~a,b\geqslant0\},\\[1mm]
\mathrm{SEP}_{2}(\mathcal{E})=\{
a|\mu_{+}\rangle\!\langle\mu_{+}|\otimes|1\rangle\!\langle 1|
+b|1\rangle\!\langle 1|\otimes|\mu_{+}\rangle\!\langle\mu_{+}|~|~a,b\geqslant0\},
\\[1mm]
\mathrm{SEP}_{3}(\mathcal{E})=\{
a|\mu_{-}\rangle\!\langle\mu_{-}|\otimes|1\rangle\!\langle 1|
+b|1\rangle\!\langle 1|\otimes|\mu_{-}\rangle\!\langle\mu_{-}|~|~a,b\geqslant0\},
\end{array}
\end{equation}
for the ensemble $\mathcal{E}$ of Eq.~\eqref{eq:dte}.
Now, a straightforward calculation leads us to
\begin{equation}\label{eq:sfclu}
\begin{array}{l}
\langle v|(H-\eta_{1}\rho_{1})|v\rangle=0
~~\forall|v\rangle=
|\mu_{+}\rangle\otimes|\mu_{-}\rangle,
|\mu_{-}\rangle\otimes|\mu_{+}\rangle,\\[1mm]
\langle v|(H-\eta_{2}\rho_{2})|v\rangle=0
~~\forall|v\rangle=
|\mu_{+}\rangle\otimes|1\rangle,
|1\rangle\otimes|\mu_{+}\rangle,
\\[1mm]
\langle v|(H-\eta_{3}\rho_{3})|v\rangle=0
~~\forall|v\rangle=
|\mu_{-}\rangle\otimes|1\rangle,
|1\rangle\otimes|\mu_{-}\rangle,
\end{array}
\end{equation}
and this implies 
\begin{equation}\label{eq:trher}
\mathrm{Tr}[(H-\eta_{i}\rho_{i})E]=0
\end{equation}
for all $E$ in $\mathrm{SEP}_{i}(\mathcal{E})$($i=1,2,3$).
From the definition of $\mathrm{SEP}_{i}^{*}(\mathcal{E})$($i=1,2,3$) in Eq.~\eqref{eq:sepd}, we have $H-\eta_{i}\rho_{i}\in\mathrm{SEP}_{i}^{*}(\mathcal{E})$ for each $i=1,2,3$, and this shows the validity of Condition \eqref{eq:hconb}.
Finally, Condition \eqref{eq:nscqsb} naturally follows from Eqs.~\eqref{eq:elom} and \eqref{eq:sfclu}.
From Theorem \ref{thm:nsc},
the minimum quantity $q_{\rm SEP}(\mathcal{E})$ in Eq.~\eqref{eq:qsep} is
\begin{equation}\label{eq:exqs}
q_{\rm SEP}(\mathcal{E})=\mathrm{Tr}H=\sum_{i=1}^{3}\eta_{i}\mathrm{Tr}(\rho_{i}M_{i})=\frac{1}{2}.
\end{equation}
\indent We also note that the measurement $\{M_{i}\}_{i=0}^{3}$ with Eq.~\eqref{eq:elom} is a LOCC measurement because it can be implemented by performing the same local measurement $\{\frac{2}{3}|1\rangle\langle1|,\frac{2}{3}|\mu_{+}\rangle\!\langle\mu_{+}|,\frac{2}{3}|\mu_{-}\rangle\!\langle\mu_{-}|\}$ on two subsystems. Thus, Corollary~\ref{cor:plqg} and Eq.~\eqref{eq:exqs} lead us to
\begin{equation}
p_{\rm L}(\mathcal{E})=q_{\rm SEP}(\mathcal{E})=\frac{1}{2}.
\end{equation}

\indent Now, we provide another example of mixed-state ensemble
in multipartite high-dimensional quantum systems to illustrate the application of 
our results in obtaining
$p_{\rm G}(\mathcal{E})$, $q_{\rm SEP}(\mathcal{E})$, and $p_{\rm L}(\mathcal{E})$.\\

\noindent\textbf{Example 2.}
For any integer $d\geqslant3$,
let us consider the following ($d-1$)-qu$d$it state
ensemble $\mathcal{E}=\{\eta_{i},\rho_{i}\}_{i=1}^{d}$ consisting of $d$ mixed states with equal prior probabilities,
\begin{gather}\label{eq:exens3}
\eta_{i}=\frac{1}{d},~\rho_{i}=
\frac{1}{d^{d-1}-2(d-1)}\Bigg[\mathbbm{1}-
\sum_{\substack{j=1\\ j\neq  i}}^{d}(|\Lambda_{j}\rangle\!\langle\Lambda_{j}|+|\Omega_{j}\rangle\!\langle\Omega_{j}|)\Bigg],
~i=1,\ldots,d,\\
|\Lambda_{j}\rangle=|j-1\rangle^{\otimes (d-1)},~
|\Omega_{j}\rangle=\frac{1}{\sqrt{d-1}}\sum_{\substack{k=0\\k\neq j-1}}^{d-1}
\bigotimes_{\substack{l=0\\k\oplus l\neq j-1}}^{d-1}|k\oplus l\rangle,
~~j=1,\ldots,d,\nonumber
\end{gather}
where $\oplus$ denotes modulo-$d$ addition.
To obtain $p_{\rm G}(\mathcal{E})$, we use the unambiguous measurement $\{M_{i}\}_{i=0}^{d}$ with
\begin{equation}\label{eq:gopm3}
M_{0}=\mathbbm{1}-\sum_{j=1}^{d}(|\Lambda_{j}\rangle\!\langle\Lambda_{j}|+
|\Omega_{j}\rangle\!\langle\Omega_{j}|),~
M_{i}=|\Lambda_{i}\rangle\!\langle\Lambda_{i}|+
|\Omega_{i}\rangle\!\langle\Omega_{i}|,~i=1,\ldots,d,
\end{equation}
and Hermitian operator
\begin{equation}\label{eq:gopk3}
K=\frac{1}{d^{d}-2d(d-1)}\sum_{j=1}^{d}(|\Lambda_{j}\rangle\!\langle\Lambda_{j}|+
|\Omega_{j}\rangle\!\langle\Omega_{j}|).
\end{equation}
We will show the unambiguous measurement $\{M_{i}\}_{i=0}^{d}$ and the Hermitian operator $K$ satisfy the conditions of Proposition~\ref{pro:nsc} for the ensemble $\mathcal{E}=\{\eta_{i},\rho_{i}\}_{i=1}^{d}$ in Eq.~\eqref{eq:exens3}.\\
\indent For $K$ of Eq.~\eqref{eq:gopk3},
Condition \eqref{eq:nscuda} is obvious and Condition \eqref{eq:nscudb} is also true due to the orthogonality of $\{|\Lambda_{i}\rangle,|\Omega_{i}\rangle\}_{i=1}^{d}$. 
For Condition \eqref{eq:nscudc}, we first note both $|\Lambda_{i}\rangle\!\langle\Lambda_{i}|$ and $|\Omega_{i}\rangle\!\langle\Omega_{i}|$ are orthogonal to $\rho_{j}$, that is, $\mathrm{Tr}(|\Lambda_{i}\rangle\!\langle\Lambda_{i}|\rho_{j})=\mathrm{Tr}(|\Omega_{i}\rangle\!\langle\Omega_{i}|\rho_{j})=0$, for all $i,j=1,\ldots,d$ with $i\neq j$.\\
\indent Let $O_{1}$ be the set of all positive semidefinite operators orthogonal to $\rho_{1}$. 
From the definition of $\rho_{i}$ in Eq.~\eqref{eq:exens3}, we have
\begin{equation}
O_{1}=\{E\succeq0\,|\,
\mbox{$E$ acting on the subspace spanned by $|\Lambda_{2}\rangle,\ldots,|\Lambda_{d}\rangle$ and $|\Omega_{2}\rangle,\ldots,|\Omega_{d}\rangle$}\}.
\end{equation}
Now, let $O_{j}$ be the set of all positive semidefinite operators orthogonal to $\rho_{j}$ for each $j=2,\ldots,d$. Similarly, we have
\begin{equation}
\begin{array}{l}
O_{j}=\{E\succeq0\,|\,
\mbox{$E$ acting on the subspace spanned by $|\Lambda_{1}\rangle,\ldots,\widehat{|\Lambda_{j}\rangle},\ldots,|\Lambda_{d}\rangle$ and $|\Omega_{1}\rangle,\ldots,\widehat{|\Omega_{j}\rangle},\ldots,|\Omega_{d}\rangle$}\},
\end{array}
\end{equation}
where
\begin{equation}\label{eq:lowh}
\begin{array}{l}
\{|\Lambda_{1}\rangle,\ldots,\widehat{|\Lambda_{j}\rangle},\ldots,|\Lambda_{d}\rangle\}
=\{
|\Lambda_{1}\rangle,\ldots,|\Lambda_{j-1}\rangle,|\Lambda_{j+1}\rangle,\ldots,|\Lambda_{d}\rangle\},\\[1mm]
\{|\Omega_{1}\rangle,\ldots,\widehat{|\Omega_{j}\rangle},\ldots,|\Omega_{d}\rangle\}
=
\{|\Omega_{1}\rangle,\ldots,|\Omega_{j-1}\rangle,|\Omega_{j+1}\rangle,\ldots,|\Omega_{d}\rangle\}.
\end{array}
\end{equation}
From the definition of $\mathrm{Pos}_{i}(\mathcal{E})$ in Eq.~\eqref{eq:posi}, we have
\begin{equation}\label{eq:expos}
\mathrm{Pos}_{i}(\mathcal{E})
=\bigcap_{\substack{j=1\\ j\neq i}}^{d}O_{j}
=\{E\succeq0\,|\,\mbox{ $E$ acting on the subspace spanned by $|\Lambda_{i}\rangle$ and $|\Omega_{i}\rangle$}\}~~\forall i=1,\ldots,d,
\end{equation}
for the ensemble $\mathcal{E}$ of Eq.~\eqref{eq:exens3}.
Moreover, a straightforward calculation leads us to
\begin{equation}\label{eq:stcale}
\langle\Lambda_{i}|(K-\eta_{i}\rho_{i})|\Lambda_{i}\rangle
=\langle\Lambda_{i}|(K-\eta_{i}\rho_{i})|\Omega_{i}\rangle
=\langle\Omega_{i}|(K-\eta_{i}\rho_{i})|\Omega_{i}\rangle=0
~\forall i=1,\ldots,d.
\end{equation}
This implies Eq.~\eqref{eq:trker}
for all $E$ in $\mathrm{Pos}_{i}(\mathcal{E})$($i=1,\ldots,d$).
From the definition of $\mathrm{Pos}_{i}^{*}(\mathcal{E})$ in Eq.~\eqref{eq:posis}, we have $K-\eta_{i}\rho_{i}\in\mathrm{Pos}_{i}^{*}(\mathcal{E})$ for each $i=1,\ldots,d$, and this shows the validity of Condition \eqref{eq:nscudc}.
Finally, Condition \eqref{eq:nscudd} naturally follows from Eqs.~\eqref{eq:gopm3} and \eqref{eq:stcale}.
Therefore, the optimal success probability $p_{\rm G}(\mathcal{E})$ in Eq.~\eqref{eq:msp} is
\begin{equation}\label{eq:exosp2}
p_{\rm G}(\mathcal{E})=\sum_{i=1}^{d}\eta_{i}\mathrm{Tr}(\rho_{i}M_{i})=\mathrm{Tr}K=\frac{2}{d^{d-1}-2(d-1)}.
\end{equation}
\indent To obtain $q_{\rm SEP}(\mathcal{E})$ in Eq.~\eqref{eq:qsep}, let us consider 
the Hermitian operator
\begin{equation}\label{eq:elok3}
H=\frac{1}{d^{d}-2d(d-1)}\sum_{j=1}^{d}|\Lambda_{j}\rangle\!\langle\Lambda_{j}|,
\end{equation}
and the separable unambiguous measurement $\{M_{i}\}_{i=0}^{d}$ consisting of 
\begin{equation}\label{eq:elom3}
M_{0}=\mathbbm{1}-\sum_{j=1}^{d}|\Lambda_{j}\rangle\!\langle\Lambda_{j}|,~
M_{i}=|\Lambda_{i}\rangle\!\langle\Lambda_{i}|,~i=1,\ldots,d,
\end{equation}
where $\{|\Lambda_{j}\rangle\}_{j=1}^{d}$ is defined in Eq.~\eqref{eq:exens3}.
We will show the Hermitian operator $H$ and the separable unambiguous measurement $\{M_{i}\}_{i=0}^{d}$ satisfy the conditions of Theorem~\ref{thm:nsc} for the ensemble $\mathcal{E}=\{\eta_{i},\rho_{i}\}_{i=1}^{d}$ in Eq.~\eqref{eq:exens3}.\\
\indent For $H$ of Eq.~\eqref{eq:elok3},
Condition \eqref{eq:hcona} is satisfied from the argument after Eq.~\eqref{eq:sepd}
and Condition \eqref{eq:nscqsa} is also true due to the orthogonality of $\{|\Lambda_{i}\rangle\}_{i=1}^{d}$. For Condition \eqref{eq:hconb}, we first note that $|\Lambda_{i}\rangle\!\langle\Lambda_{i}|$ is obviously separable for all $i=1,\ldots,d$.
However, $|\Omega_{i}\rangle\!\langle\Omega_{i}|$ is not separable for each $i=1,\ldots,d$ because the reduced density operator of $|\Omega_{i}\rangle\!\langle\Omega_{i}|$ onto any qu$d$it subsystem is not rank one.\\
\indent 
For each $i=1,\ldots,d$, we also note that $|\Lambda_{i}\rangle\!\langle\Lambda_{i}|$
is the only pure product state acting on the subspace spanned by $|\Lambda_{i}\rangle$ and $|\Omega_{i}\rangle$.
To see this, let us consider any pure state $(c_{1}|\Lambda_{i}\rangle+c_{2}|\Omega_{i}\rangle)(c_{1}^{*}\langle\Lambda_{i}|+c_{2}^{*}\langle\Omega_{i}|)$ on the subspace spanned by $|\Lambda_{i}\rangle$ and $|\Omega_{i}\rangle$ with complex numbers $c_{1}$ and $c_{2}$ such that $|c_{1}|^{2}+|c_{2}|^{2}=1$. 
Due to the definitions of $|\Lambda_{i}\rangle$ and $|\Omega_{i}\rangle$ in Eq.~\eqref{eq:exens3}, it is straightforward to show that any partial trace of $|\Lambda_{i}\rangle\!\langle\Omega_{i}|$ is the zero operator.
Thus, the reduced density operator of $(c_{1}|\Lambda_{i}\rangle+c_{2}|\Omega_{i}\rangle)(c_{1}^{*}\langle\Lambda_{i}|+c_{2}^{*}\langle\Omega_{i}|)$ is
\begin{equation}\label{eq:rdo}
\mathrm{Tr}_{\mathbb{S}}[(c_{1}|\Lambda_{i}\rangle+c_{2}|\Omega_{i}\rangle)(c_{1}^{*}\langle\Lambda_{i}|+c_{2}^{*}\langle\Omega_{i}|)]
=|c_{1}|^{2}\,|\mathrm{Tr}_{\mathbb{S}}(|\Lambda_{i}\rangle\!\langle\Lambda_{i}|)+|c_{2}|^{2}\,|\mathrm{Tr}_{\mathbb{S}}(|\Omega_{i}\rangle\!\langle\Omega_{i}|),
\end{equation}
where $\mathbb{S}$ is any $\alpha$-qu$d$it subsystem 
with $\alpha<d-1$. If $c_{2}$ is nonzero, then the reduced density operator in Eq.~\eqref{eq:rdo} is not rank one because $|\Omega_{i}\rangle\!\langle\Omega_{i}|$ is not separable, which implies that $(c_{1}|\Lambda_{i}\rangle+c_{2}|\Omega_{i}\rangle)(c_{1}^{*}\langle\Lambda_{i}|+c_{2}^{*}\langle\Omega_{i}|)$ is not separable.
Thus, $|\Lambda_{i}\rangle\!\langle\Lambda_{i}|$ is the only pure product state acting on the subspace spanned by $|\Lambda_{i}\rangle$ and $|\Omega_{i}\rangle$.\\
\indent 
Now, let us consider $\mathrm{SEP}_{i}(\mathcal{E})$
of the ensemble $\mathcal{E}$ in Eq.~\eqref{eq:exens3} for each $i=1,\ldots,d$.
From Eq.~\eqref{eq:expos} and the definition of $\mathrm{SEP}_{i}(\mathcal{E})$ in Eq.~\eqref{eq:sepi}, we have
\begin{equation}
\mathrm{SEP}_{i}(\mathcal{E})
=\{E\in\mathrm{SEP}\,|\,\mbox{ $E$ acting on the subspace spanned by $|\Lambda_{i}\rangle$ and $|\Omega_{i}\rangle$}\}.
\end{equation}
Because any separable state is a convex combination of pure product states and $|\Lambda_{i}\rangle\!\langle\Lambda_{i}|$ is the only pure product state acting on the subspace spanned by $|\Lambda_{i}\rangle$ and $|\Omega_{i}\rangle$, we have
\begin{equation}
\mathrm{SEP}_{i}(\mathcal{E})
=\big\{a|\Lambda_{i}\rangle\!\langle\Lambda_{i}|~\big|~a\geqslant0\big\}.
\end{equation}
Moreover, a straightforward calculation leads us to
\begin{equation}\label{eq:scluth}
\langle\Lambda_{i}|(H-\eta_{i}\rho_{i})|\Lambda_{i}\rangle=0,
\end{equation}
which implies $\mathrm{Tr}[(H-\eta_{i}\rho_{i})E]=0$
for all $E$ in $\mathrm{SEP}_{i}(\mathcal{E})$.
Thus, we have $H-\eta_{i}\rho_{i}\in\mathrm{SEP}_{i}^{*}(\mathcal{E})$, and this shows the validity of Condition \eqref{eq:hconb}
for each $i=1,\ldots,d$.
Finally, Condition \eqref{eq:nscqsb} naturally follows from Eqs.~\eqref{eq:elom3} and \eqref{eq:scluth}.\\
\indent From Theorem~\ref{thm:nsc},
the minimum quantity $q_{\rm SEP}(\mathcal{E})$ in Eq.~\eqref{eq:qsep} is
\begin{equation}\label{eq:exqs3}
q_{\rm SEP}(\mathcal{E})=\mathrm{Tr}H=\sum_{i=1}^{d}\eta_{i}\mathrm{Tr}(\rho_{i}M_{i})=\frac{1}{d^{d-1}-2(d-1)}.
\end{equation}
Moreover, the measurement $\{M_{i}\}_{i=0}^{d}$ with Eq.~\eqref{eq:elom3} is a LOCC measurement since it can be implemented by performing the same local measurement $\{|i\rangle\!\langle i|\}_{i=0}^{d-1}$ on all subsystems. Thus, Corollary~\ref{cor:plqg} and Eq.~\eqref{eq:exqs3} lead us to
\begin{equation}\label{eq:plqse}
p_{\rm L}(\mathcal{E})=q_{\rm SEP}(\mathcal{E})=\frac{1}{d^{d-1}-2(d-1)}.
\end{equation}

\indent Examples~1 and 2 are special cases that $p_{\rm L}(\mathcal{E})=p_{\rm SEP}(\mathcal{E})$, or equivalently $p_{\rm L}(\mathcal{E})=q_{\rm SEP}(\mathcal{E})$.
We also note that there exist separable-state ensembles with $p_{\rm L}(\mathcal{E})<q_{\rm SEP}(\mathcal{E})$, therefore $p_{\rm L}(\mathcal{E})<p_{\rm SEP}(\mathcal{E})$.
A well-known example with $p_{\rm L}(\mathcal{E})<p_{\rm SEP}(\mathcal{E})$ is the domino state ensemble which can be perfectly discriminated by separable measurements but not by LOCC measurements \cite{benn19991}.

\section*{DISCUSSION}\label{dissec}
\indent In this paper, we have considered the situation of unambiguously discriminating multipartite quantum states, and provided an upper bound $q_{\rm SEP}(\mathcal{E})$ for the maximum success probability of optimal local discrimination $p_{\rm L}(\mathcal{E})$.
We have further established a necessary and sufficient condition
for the Hermitian operator $H$ to realize $q_{\rm SEP}(\mathcal{E})$.
Moreover, we have provided a necessary and sufficient condition for the upper bound $q_{\rm SEP}(\mathcal{E})$ to be saturated.
Finally, we have illustrated our results by examples in multidimensional multipartite quantum systems.\\
\indent We remark that finding $p_{\rm G}(\mathcal{E})$ and $q_{\rm SEP}(\mathcal{E})$ in unambiguously discriminating separable quantum states can be useful in studying the phenomenon of \emph{nonlocality without entanglement}(NLWE)\cite{benn19991}.
For the optimal UD of a separable-state ensemble $\{\eta_{i},\rho_{i}\}_{i=1}^{n}$,
the NLWE phenomenon occurs when $p_{\rm G}(\mathcal{E})$ cannot be realized only by LOCC,
that is, $p_{\rm L}(\mathcal{E})<p_{\rm G}(\mathcal{E})$. Due to Corollary~\ref{cor:plqg}, $q_{\rm SEP}(\mathcal{E})<p_{\rm G}(\mathcal{E})$ means $p_{\rm L}(\mathcal{E})<p_{\rm G}(\mathcal{E})$, therefore the occurrence of NLWE. It is a natural future work to find good bounds on optimal local discrimination in other generalized state discrimination strategies such as an optimal discrimination with a fixed rate of inconclusive results\cite{chef1998,zhan1999,fiur2003,baga2012,herz2015}.

\section*{METHODS}\label{mtdsec}
\indent In this section, we prove Theorem~\ref{thm:qupb} by showing that 
\begin{subequations}
\begin{gather}
p_{\rm SEP}(\mathcal{E})\leqslant q_{\rm SEP}(\mathcal{E}),\label{eq:pleqq}\\
p_{\rm SEP}(\mathcal{E})\geqslant q_{\rm SEP}(\mathcal{E}).\label{eq:pgeqq}
\end{gather}
\end{subequations}

\subsection*{Proof of Inequality~\eqref{eq:pleqq}}
\indent Let us assume that $\{M_{i}\}_{i=0}^{n}$ is a separable unambiguous measurement realizing $p_{\rm SEP}(\mathcal{E})$ and $H$ is a Hermitian operator giving $q_{\rm SEP}(\mathcal{E})$.
Since this assumption implies Conditions \eqref{eq:sepm} and \eqref{eq:hcon}, we have
\begin{equation}\label{eq:twcc}
\mathrm{Tr}(M_{0}H)\geqslant0,~\mathrm{Tr}[M_{i}(H-\eta_{i}\rho_{i})]\geqslant0~\forall i=1,\ldots,n,
\end{equation}
which lead us to 
\begin{equation}\label{eq:pleqh}
p_{\rm SEP}(\mathcal{E})=\sum_{i=1}^{n}\eta_{i}\mathrm{Tr}(\rho_{i}M_{i})
\leqslant\sum_{i=1}^{n}\eta_{i}\mathrm{Tr}(\rho_{i}M_{i})
+\mathrm{Tr}(M_{0}H)+\sum_{i=1}^{n}\mathrm{Tr}[M_{i}(H-\eta_{i}\rho_{i})]=\mathrm{Tr}H=q_{\rm SEP}(\mathcal{E}),
\end{equation}
where the second equality is due to $\sum_{i=0}^{n}M_{i}=\mathbbm{1}$. 
Therefore, Inequality~\eqref{eq:pleqq} holds.

\subsection*{Proof of Inequality~\eqref{eq:pgeqq}}
We first prove Inequality~\eqref{eq:pgeqq} when $p_{\rm SEP}(\mathcal{E})=0$. In this case, we claim that every $M\in\mathrm{SEP}_{j}(\mathcal{E})$ satisfies $\mathrm{Tr}(\rho_{j}M)=0$ for each $j=1,\ldots,n$.
To see this, suppose that there is a positive-semidefinite product operator $E\in\mathrm{SEP}_{j}(\mathcal{E})$ with $\mathrm{Tr}(\rho_{j}E)>0$. Since $\mathbbm{1}-E/\mathrm{Tr}E$ is obvious separable,  the following measurement $\{M_{i}\}_{i=0}^{n}$ 
is a separable unambiguous measurement:
\begin{equation}\label{eq:fpsum}
M_{0}=\mathbbm{1}-\frac{1}{\mathrm{Tr}E}E,~~M_{j}=\frac{1}{\mathrm{Tr}E}E,~~ M_{i}=0_{\mathcal{H}}~~\forall i=1,\ldots,n~~\mbox{with}~~i\neq j,
\end{equation}
where $0_{\mathcal{H}}$ is the zero operator on $\mathcal{H}$.\\
\indent Moreover, the measurement of \eqref{eq:fpsum} gives 
\begin{equation}\label{eq:simm}
\sum_{i=1}^{n}\eta_{i}\mathrm{Tr}(\rho_{i}M_{i})=
\eta_{j}\mathrm{Tr}(\rho_{j}E)>0.
\end{equation}
From Inequality~\eqref{eq:simm} and the definition of $p_{\rm SEP}(\mathcal{E})$, we have $p_{\rm SEP}(\mathcal{E})>0$, a contradiction. Therefore, 
there is no positive-semidefinite product operator $E\in\mathrm{SEP}_{j}(\mathcal{E})$ with $\mathrm{Tr}(\rho_{j}E)>0$.
Since every positive-semidefinite separable operator can be represented as a summation of positive-semidefinite product operators, we have
\begin{equation}\label{eq:trrjm}
\mathrm{Tr}(\rho_{j}M)=0
\end{equation}
for all $M\in\mathrm{SEP}_{j}(\mathcal{E})$.
Equation~\eqref{eq:trrjm} together with the definition of $\mathrm{SEP}_{j}^{*}(\mathcal{E})$ in Eq.~\eqref{eq:sepd} imply that 
\begin{equation}\label{eq:arim}
a\rho_{j}\in\mathrm{SEP}_{j}^{*}(\mathcal{E})
\end{equation}
for any real number $a$ and $j \in \left\{1,\ldots,n\right\}$.\\
\indent By letting $H =0_{\mathcal{H}}$, we trivially have
\begin{equation}\label{eq:himss}
H\in\mathrm{SEP}^{*}.
\end{equation}
For each $i=1,\ldots,n$, we also have
\begin{equation}\label{eq:hmree}
H-\eta_{i}\rho_{i}=-\eta_{i}\rho_{i}\in\mathrm{SEP}_{i}^{*}(\mathcal{E})
\end{equation}
where the inclusion is from Eq.~\eqref{eq:arim}.
Equations~\eqref{eq:himss} and \eqref{eq:hmree} imply that  $H=0_{\mathcal{H}}$ satisfies Condition~\eqref{eq:hcon}, therefore
\begin{equation}\label{eq:qshzp}
q_{\rm SEP}(\mathcal{E})
\leqslant\mathrm{Tr}H=0,
\end{equation}
where the inequality is due to the definition of $q_{\rm SEP}(\mathcal{E})$. Thus, Inequality~\eqref{eq:qshzp} and the assumption $p_{\rm SEP}(\mathcal{E})=0$ lead us to Inequality~\eqref{eq:pgeqq}.\\
\indent Now, we prove Inequality~\eqref{eq:pgeqq} when $p_{\rm SEP}(\mathcal{E})>0$.
\begin{lemma}\label{lem:trn}
If $E\in\mathrm{SEP}^{*}$ and $E\neq0_{\mathcal{H}}$, then
$\mathrm{Tr}E>0$, where $0_{\mathcal{H}}$ is the zero operator on $\mathcal{H}$.
\end{lemma}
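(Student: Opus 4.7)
The plan is to combine two simple facts: that $\mathbbm{1}$ and every rank-one product projector belong to $\mathrm{SEP}$, together with a polarization argument in each tensor factor.

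First I would note that the identity $\mathbbm{1}$ on $\mathcal{H}=\bigotimes_{k=1}^{m}\mathcal{H}_{k}$ is itself separable, since choosing any orthonormal basis $\{|e_{k,i_{k}}\rangle\}_{i_{k}}$ of each $\mathcal{H}_{k}$ gives $\mathbbm{1}=\sum_{i_{1},\ldots,i_{m}}\bigotimes_{k=1}^{m}|e_{k,i_{k}}\rangle\!\langle e_{k,i_{k}}|\in\mathrm{SEP}$. Hence, for any $E\in\mathrm{SEP}^{*}$, the defining inequality of $\mathrm{SEP}^{*}$ applied with $B=\mathbbm{1}$ yields $\mathrm{Tr}\,E\geqslant 0$. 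It therefore remains to exclude the possibility $\mathrm{Tr}\,E=0$ when $E\neq 0_{\mathcal{H}}$, which I would do by contradiction.

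Suppose $\mathrm{Tr}\,E=0$. For any unit product vector $|\phi\rangle=|\phi_{1}\rangle\otimes\cdots\otimes|\phi_{m}\rangle$, the projector $|\phi\rangle\!\langle\phi|$ lies in $\mathrm{SEP}$, so $\langle\phi|E|\phi\rangle\geqslant 0$. Now extend each $|\phi_{k}\rangle$ to an orthonormal basis of $\mathcal{H}_{k}$; the associated product basis decomposes $\mathbbm{1}$ as above, so
\begin{equation}
0=\mathrm{Tr}\,E=\sum_{i_{1},\ldots,i_{m}}\langle e_{1,i_{1}}\cdots e_{m,i_{m}}|E|e_{1,i_{1}}\cdots e_{m,i_{m}}\rangle,
\end{equation}
a sum of nonnegative terms. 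Every summand must vanish, and in particular $\langle\phi|E|\phi\rangle=0$. Since $|\phi\rangle$ was an arbitrary unit product vector, $\langle\psi|E|\psi\rangle=0$ for every product state $|\psi\rangle$.

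Finally, I would invoke polarization in each factor successively: writing any $|\phi_{k}\rangle,|\psi_{k}\rangle$ as linear combinations of $\tfrac{1}{2}(|\phi_{k}\rangle\pm|\psi_{k}\rangle)$ and $\tfrac{1}{2}(|\phi_{k}\rangle\pm i|\psi_{k}\rangle)$ (each still a unit-normalizable vector in $\mathcal{H}_{k}$) and iterating over $k=1,\ldots,m$, the identity $\langle\psi|E|\psi\rangle=0$ on all product states upgrades to $\langle\phi_{1}\cdots\phi_{m}|E|\psi_{1}\cdots\psi_{m}\rangle=0$ for arbitrary choices of $|\phi_{k}\rangle,|\psi_{k}\rangle$. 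Product vectors span $\mathcal{H}$, hence $E=0_{\mathcal{H}}$, contradicting the hypothesis $E\neq 0_{\mathcal{H}}$. The main technical point is the multipartite polarization step, but it is routine because polarization in a single Hilbert space reconstructs the sesquilinear form from the quadratic form and applies factor by factor.
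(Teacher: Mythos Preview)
Your proof is correct and follows essentially the same route as the paper's: both use $\mathbbm{1}\in\mathrm{SEP}$ to get $\mathrm{Tr}E\geqslant 0$, assume $\mathrm{Tr}E=0$, decompose $\mathbbm{1}$ in a product basis to force $\langle\phi|E|\phi\rangle=0$ for every product vector $|\phi\rangle$, and then conclude $E=0_{\mathcal H}$. The only cosmetic difference is the last step: the paper phrases it as ``$\mathrm{SEP}$ spans the Hermitian operators, so $\mathrm{Tr}(EF)=0$ for all Hermitian $F$'', whereas you invoke factor-by-factor polarization to kill all off-diagonal product matrix elements---these are two equivalent formulations of the same linear-algebraic fact.
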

\begin{proof}
The proof is by contradiction.
We first note that $\mathrm{Tr}E=\mathrm{Tr}(\mathbbm{1}E)\geqslant0$ because $E\in\mathrm{SEP}^{*}$ and the identity operator $\mathbbm{1}$ is obviously separable.
Thus, let us suppose $\mathrm{Tr}E=0$.\\
\indent For an arbitrary orthonormal product basis $\{|e_{i}\rangle\}_{i=1}^{D}$ of the multipartite Hilbert space $\mathcal{H}=\bigotimes_{k=1}^{m}\mathcal{H}_{k}$, we have
\begin{equation}\label{eq:tree2}
\sum_{i=1}^{D}\mathrm{Tr}(E|e_{i}\rangle\!\langle e_{i}|)=\mathrm{Tr}(E\mathbbm{1})
=\mathrm{Tr}E=0,
\end{equation}
where $D$ is the dimension of $\mathcal{H}$. 
From $E\in\mathrm{SEP}^{*}$ and $|e_{i}\rangle\!\langle e_{i}|\in\mathrm{SEP}$ for all $i=1,\ldots,D$, we have
\begin{equation}\label{eq:tree1}
\mathrm{Tr}(E|e_{i}\rangle\!\langle e_{i}|)\geqslant0~~\forall i=1,\ldots,D.
\end{equation}
Equation~\eqref{eq:tree2} and Inequality~\eqref{eq:tree1} lead us to
$\mathrm{Tr}(E|e_{i}\rangle\!\langle e_{i}|)=0$
for all $i=1,\ldots,D$.
Since the choice of $\{|e_{i}\rangle\}_{i=1}^{D}$ can be arbitrary, $\mathrm{Tr}(E|e\rangle\!\langle e|)=0$ for any product vector $|e\rangle\in\mathcal{H}$, therefore
\begin{equation}\label{eq:treff}
\mathrm{Tr}(EF)=0~~\forall F\in\mathrm{SEP}.
\end{equation}
\indent We note that $\mathrm{SEP}$ spans the set of all Hermitian operators on $\mathcal{H}$. To see this, we first note that the set of all positive-semidefinite operators on $\mathcal{H}_{k}$ spans the set of all Hermitian operators on $\mathcal{H}_{k}$ for each $k=1,\ldots,m$.
Moreover, every Hermitian operator $A$ on $\mathcal{H}$ can be represented as a summation of product Hermitian operators, 
\begin{equation}
A=\sum_{l}\bigotimes_{k=1}^{m}A_{l,k},
\end{equation}
where $A_{l,k}$ is a Hermitian operator on $\mathcal{H}_{k}$ for each $k=1,\ldots,m$.
Therefore, Eq.~\eqref{eq:treff} leads us to $\mathrm{Tr}(EF)=0$ for any Hermitian operator $F$ on $\mathcal{H}$. This means $E=0_{\mathcal{H}}$, a contradiction. Thus, $\mathrm{Tr}E>0$.
\end{proof}
\indent Let us consider the set
\begin{equation}\label{eq:sesm}
S(\mathcal{E})=\big\{\big(\sum_{i=1}^{n}\eta_{i}\mathrm{Tr}(\rho_{i}M_{i})-p,\mathbbm{1}-\sum_{i=0}^{n}M_{i}\big)\in\mathbb{R}\times\mathbb{H}\,\big|\,
p>p_{\rm SEP}(\mathcal{E}),~
M_{0}\in\mathrm{SEP},~M_{i}\in\mathrm{SEP}_{i}(\mathcal{E})~\forall i=1,\ldots,n\big\},
\end{equation}
where $\mathbb{R}$ is the set of all real numbers and $\mathbb{H}$ is the set of all Hermitian operators on the multipartite Hilbert space $\mathcal{H}$.
We note that $S(\mathcal{E})$ is a convex set due to the convexity of $\mathrm{SEP}$ and $\mathrm{SEP}_{i}(\mathcal{E})$($i=1,\ldots,n$) in Eqs.~\eqref{eq:sep} and \eqref{eq:sepi}.
Moreover, $S(\mathcal{E})$ does not have the origin $(0,0_{\mathcal{H}})$ of $\mathbb{R}\times\mathbb{H}$ otherwise there exists a separable unambiguous measurement $\{M_{i}\}_{i=0}^{n}$ with $\sum_{i=1}^{n}\eta_{i}\mathrm{Tr}(\rho_{i}M_{i})>p_{\rm SEP}(\mathcal{E})$, and this contradicts the optimality of $p_{\rm SEP}(\mathcal{E})$ in Eq.~\eqref{eq:psep}. We also note that the Cartesian product $\mathbb{R}\times\mathbb{H}$
can be considered as a real vector space with an inner product defined as
\begin{equation}\label{eq:inpr}
\langle (a,A),(b,B)\rangle=ab+\mathrm{Tr}(AB),
~~(a,A),(b,B)\in\mathbb{R}\times\mathbb{H}.
\end{equation}
\indent Since $S(\mathcal{E})$ in Eq.~\eqref{eq:sesm} and the  single-element set $\{(0,0_{\mathcal{H}})\}$ are disjoint convex sets, it follows from separating hyperplane theorem\cite{boyd2004,sht} that there is $(\gamma,\Gamma)\in\mathbb{R}\times\mathbb{H}$ such that 
\begin{subequations}
\begin{gather}
(\gamma,\Gamma)\neq(0,0_{\mathcal{H}}),\label{eq:rgrg1}\\
\langle (\gamma,\Gamma),(r,G)\rangle\leqslant0~~\forall(r,G)\in S(\mathcal{E})\label{eq:rgrg2}.
\end{gather}
\end{subequations}
\indent Suppose 
\begin{subequations}
\begin{gather}
\mathrm{Tr}\Gamma\leqslant\gamma p_{\rm SEP}(\mathcal{E})\label{eq:trglg},\\
\Gamma\in\mathrm{SEP}^{*},\label{eq:gimss}\\
\Gamma-\gamma\eta_{i}\rho_{i}\in\mathrm{SEP}_{i}^{*}(\mathcal{E})~\forall i=1,\ldots,n,\label{eq:gsgmg}\\
\gamma>0,\label{eq:gleq0}
\end{gather}
\end{subequations}
where $\mathrm{SEP}^{*}$ and $\mathrm{SEP}_{i}^{*}(\mathcal{E})$ are defined in Eq.~\eqref{eq:sepd}.
From Conditions~\eqref{eq:gimss}, \eqref{eq:gsgmg}, and \eqref{eq:gleq0}, the Hermitian operator $H=\Gamma/\gamma$ satisfies Condition~\eqref{eq:hcon}.
Thus, the definition of $q_{\rm SEP}(\mathcal{E})$ in Eq.~\eqref{eq:qsep} leads us to
\begin{equation}\label{eq:qslth}
q_{\rm SEP}(\mathcal{E})\leqslant\mathrm{Tr}H.
\end{equation}
Moreover, Condition~\eqref{eq:trglg} and the definition of $H=\Gamma/\gamma$ imply
\begin{equation}\label{eq:thlps}
\mathrm{Tr}H\leqslant p_{\rm SEP}(\mathcal{E}).
\end{equation}
Inequalities~\eqref{eq:qslth} and \eqref{eq:thlps} complete the proof of Inequality~\eqref{eq:pgeqq}.\\
\indent The rest of this section is to prove Conditions~\eqref{eq:trglg}, \eqref{eq:gimss}, \eqref{eq:gsgmg}, and \eqref{eq:gleq0}.
\begin{proof}[Proof of \eqref{eq:trglg}]
From Eq.~\eqref{eq:inpr}, Inequality~\eqref{eq:rgrg2} can be rewritten as
\begin{equation}\label{eq:trlerp}
\mathrm{Tr}\Gamma-\mathrm{Tr}(M_{0}\Gamma)-\sum_{i=1}^{n}\mathrm{Tr}[M_{i}(\Gamma-\gamma\eta_{i}\rho_{i})]\leqslant \gamma p
\end{equation}
for all $p>p_{\rm SEP}(\mathcal{E})$ and all $\{M_{i}\}_{i=0}^{n}$ satisfying Condition~\eqref{eq:sepm}. If $M_{i}=0_{\mathcal{H}}$ for all $i=0,1,\ldots,n$,  Inequality~\eqref{eq:trlerp} becomes Inequality~\eqref{eq:trglg}
by taking the limit of $p$ to $p_{\rm SEP}(\mathcal{E})$.
\end{proof}
\begin{proof}[Proof of \eqref{eq:gimss}]
For an arbitrary $M_{0}\in\mathrm{SEP}$ and $M_{i}=0$ for all $i=1,\ldots,n$, $\{M_{i}\}_{i=0}^{n}$ clearly satisfies Condition~\eqref{eq:sepm}. 
In this case, Inequality~\eqref{eq:trlerp} becomes
\begin{equation}\label{eq:jgmgg}
\mathrm{Tr}\Gamma-\mathrm{Tr}(M_{0}\Gamma)\leqslant\gamma p_{\rm SEP}(\mathcal{E})
\end{equation}
by taking the limit of $p$ to 
$p_{\rm SEP}(\mathcal{E})$.\\
\indent Suppose $\Gamma\notin\mathrm{SEP}^{*}$,
then there exists $M\in\mathrm{SEP}$ with $\mathrm{Tr}(M\Gamma)<0$. We note that $M\in\mathrm{SEP}$ implies $tM\in\mathrm{SEP}$ for any $t>0$. Thus, $\{M_{i}\}_{i=0}^{n}$ with $M_{0}=tM$ for $t>0$ and $M_{i}=0$ for all $i=1,\ldots,n$ also satisfies Condition~\eqref{eq:sepm}.
Now, Inequality~\eqref{eq:jgmgg} can be rewritten as
\begin{equation}\label{eq:tmttm}
\mathrm{Tr}\Gamma-\mathrm{Tr}(tM\Gamma)\leqslant\gamma p_{\rm SEP}(\mathcal{E}).
\end{equation}
Since Inequality~\eqref{eq:tmttm} is true for arbitrary large $t>0$, $\gamma p_{\rm SEP}(\mathcal{E})$ can also be arbitrary large. However, this contradicts that both  $\gamma$ and $p_{\rm SEP}(\mathcal{E})$ are finite. Thus, $\Gamma\in\mathrm{SEP}^{*}$, which completes the proof of \eqref{eq:gimss}.
\end{proof}

\begin{proof}[Proof of \eqref{eq:gsgmg}]
The proof method is analogous to that of \eqref{eq:gimss}.
For each $j\in\{1,\ldots,n\}$,
let us consider an arbitrary $M_{j}\in\mathrm{SEP}_{j}(\mathcal{E})$ and $M_{i}=0_{\mathcal{H}}$ for all $i=0,1,\ldots,n$ with $i\neq j$. In this case,
$\{M_{i}\}_{i=0}^{n}$ clearly satisfies Condition~\eqref{eq:sepm} and Inequality~\eqref{eq:trlerp} becomes
\begin{equation}\label{eq:jgmgg2}
\begin{array}{ll}
\mathrm{Tr}\Gamma-\mathrm{Tr}[M_{j}(\Gamma-\gamma\eta_{j}\rho_{j})]\leqslant\gamma p_{\rm SEP}(\mathcal{E})
\end{array}
\end{equation}
by taking the limit of $p$ to 
$p_{\rm SEP}(\mathcal{E})$.\\
\indent Suppose $\Gamma-\gamma\eta_{j}\rho_{j}\notin\mathrm{SEP}_{j}^{*}(\mathcal{E})$,
then there exists $M\in\mathrm{SEP}_{j}(\mathcal{E})$ with $\mathrm{Tr}[M(\Gamma-\gamma\eta_{j}\rho_{j})]<0$. We note that $M\in\mathrm{SEP}_{j}(\mathcal{E})$ implies $tM\in\mathrm{SEP}_{j}(\mathcal{E})$ for any $t>0$. Thus, $\{M_{i}\}_{i=0}^{n}$ consisting of $M_{j}=tM$ for $t>0$ and $M_{i}=0$ for all $i=0,1,\ldots,n$ with $i\neq j$ also satisfies Condition~\eqref{eq:sepm}.
Now, Inequality~\eqref{eq:jgmgg2} can be rewritten as
\begin{equation}\label{eq:jgmsep}
\begin{array}{ll}
\mathrm{Tr}\Gamma-\mathrm{Tr}[tM(\Gamma-\gamma\eta_{j}\rho_{j})]\leqslant\gamma p_{\rm SEP}(\mathcal{E})
\end{array}
\end{equation}
Since Inequality~\eqref{eq:jgmsep} is true for arbitrary large $t>0$, $\gamma p_{\rm SEP}(\mathcal{E})$ can also be arbitrary large. However, this contradicts that both  $\gamma$ and $p_{\rm SEP}(\mathcal{E})$ are finite. Thus, $\Gamma-\gamma\eta_{j}\rho_{j}\in\mathrm{SEP}_{j}^{*}(\mathcal{E})$, which completes the proof of \eqref{eq:gsgmg}.
\end{proof}

\begin{proof}[Proof of \eqref{eq:gleq0}]
\indent Suppose $\Gamma\neq0_{\mathcal{H}}$.
From Lemma~\ref{lem:trn} together with Inequality~\eqref{eq:gimss}, we have
$\mathrm{Tr}\Gamma>0$. Thus,
Inequality~\eqref{eq:trglg} and the fact that $\mathrm{Tr}\Gamma>0$ guarantee $\gamma>0$.\\ 
\indent Now, suppose $\Gamma=0_{\mathcal{H}}$. We have $\gamma\neq0$, otherwise a contradiction to Condition \eqref{eq:rgrg1}. The strict positivity of $\gamma$ follows from Inequality~\eqref{eq:trglg} with $\mathrm{Tr}\Gamma=0$ and $p_{\rm SEP}(\mathcal{E})>0$. Thus, Inequality~\eqref{eq:gleq0} holds regardless of $\Gamma$.
\end{proof}

\section*{ACKNOWLEDGEMENTS}
\noindent
This work was supported by Basic Science Research Program(NRF-2020R1F1A1A010501270) and Quantum Computing Technology Development Program(NRF-2020M3E4A1080088) through the National Research Foundation of Korea(NRF) grant funded by the Korea government(Ministry of Science and ICT).


\end{document}